\newtheorem{theorem}{Theorem}
\newtheorem{lemma}{Lemma}
\begin{document}
\begin{frontmatter}
\title{Off-grid Multi-Source Passive Localization Using a Moving Array}
\tnotetext[t1]{This paper was funded in part by the China Postdoctoral Science Foundation, grant numbers 2017M613076 and 2016M602775; in part by the National Natural Science Foundation of China, grant numbers 61805189, 61801347, 61801344, 61522114, 61471284, 61571349, 61631019, 61871459, 61801390 and 11871392; by the Fundamental Research Funds for the Central Universities, grant numbers XJS17070, NSIY031403, and 3102017jg02014; in part by the Aeronautical Science Foundation of China, grant number 20181081003; and by the Science, Technology and Innovation Commission of Shenzhen Municipality, grant number JCYJ20170306154716846.}
%\tnotetext[t2]{The second title footnote which is a longer
%	text matter to fill through the whole text width and
%	overflow into another line in the footnotes area of the
%	first page.}
\author[1]{Dan Bao}
\ead{dbao@mail.xidian.edu.cn}
\author[1]{Changlong Wang\corref{cor1}}
\ead{wangchanglong@xidian.edu.cn}
\cortext[cor1]{Corresponding author}
\author[1]{Jingjing Cai}
\ead{jjcai@mail.xidian.edu.cn}
\address[1]{ School of Electronic Engineering, Xidian University, Xi’an, Shaanxi, 710071, China }
%\date{}
%\maketitle
\begin{abstract}
A novel direct passive localization technique
through a single moving array is proposed in this paper using the sparse
representation of the array covariance matrix in spatial domain. The measurement
is constructed by stacking the vectorized version of all the array covariance
matrices at different observing positions. First, an on-grid compressive sensing
(CS) based method is developed, where the dictionary is composed of the steering
vectors from the searching grids to the observing positions. Convex optimization
is applied to solve the ${\ell _1}$-norm minimization problem. Second, to get
much finer target positions, we develop an on-grid CS based method, where the
majorization-minimization technique replaces the atan-sum objective function in
each iteration by a quadratic convex function which can be easily minimized. The
objective function, atan-sum, is more similar to ${\ell _0}$-norm, and more
sparsity encouraging than the log-sum function. This method also works more
robustly at conditions of low SNR, and fewer observing positions are needed than
in the traditional ones. The simulation experiments verify the promises of the
proposed algorithm.
\end{abstract}

\begin{keyword}
	%% keywords here, in the form: keyword \sep keyword
Localization, array covariance matrix, off-grid
compressive sensing, multiple targets.
\end{keyword}

\end{frontmatter}

\section{Introduction}
Finding the positions of passive sources from an array of spatially
separated sensors has been of considerable interest for decades in both military
and civilian applications, such as radar, sonar, and global positioning systems,
mobile communications, multimedia, and wireless sensor networks.

There is a vast literature dedicated to the passive localization problem
applying classical signal-processing methods. The measurements needed for the
localization problem are usually the phase, strength, or time information of the
signals impinging on the antennas. Thus, the localization techniques are often
based on the time of arrival (TOA), time difference of arrival (TDOA), received
signal strength (RSS), direction of arrival (DOA) [1] [2], and phase difference
rate [3] of the intercepted signals.

The bearing only localization (BOL) or the phase difference rate based algorithms essentially make use of the phase difference information of the signals. The sensors are often modeled as a narrow bandwidth receiving antenna array. In the traditional ways, the position of the target is obtained using a two-step method. First, the DOA of the target is estimated using multiple signal classification (MUSIC) or the phase differences.Second, the position of the target is estimated using those multiple measurements of the DOA or the phase difference rates. However, this two-step localization approach suffers from the nonlinear relationships between the phase difference and the target position. It also has poor signal-to-noise ratio (SNR) performance.

Furthermore, the localization researches usually simply deal with the
single-target problems traditionally. The multiple-target- localization problem
should be divided into multiple single-target-localization problems only if the
measurement can be uniquely assigned to individual targets. A one-step
localization for multiple sources are proposed in [4] by communicating the
estimated covariance matrix between decentralized sub-arrays, which processing
only increases the load of the communication slightly. This is a MUSIC like
algorithm based on subspace decomposition. It has the ability of positioning
multiple targets simultaneously. Same idea is used in [5] for a moving array
localization problem, where a Cram\'{e}r-Rao Bound (CRB) is given. Along with the
development of compressive sensing (CS), a joint sparse representation of array
covariance matrices (JSRACM) for emitter locations on multiple phase arrays is
proposed in [6]. The locations may be estimated by solving an unconstrained
optimization problem.
\subsection{Compressed sensing and Location problems}
Compressed sensing which is also called sparse recovery has witnessed an increasing interest in recent years to meet
the high demand for efficient information acquisition scheme [7]. Contrary to the
traditional Nyquist criteria, CS, depending on finding sparse solutions to
underdetermined linear systems, can reconstruct the signals from far fewer
samples than is possible using Nyquist sampling rate. CS has seen major
applications in diverse fields, ranging from image processing to array signal
processing. These applications have been successful because of the inherent
sparsity of many real-world signals like sound, image, and video.

The early works in CS assume the sparse solutions lie on some fixed grids.
However, this is not true in practical applications, so off-grid CS is proposed.
In off-grid scenarios, an atomic norm minimization approach is proposed in [8] to
exactly recover the unknown waveform. The problem is solved by reformulating it
as an exact semidefinite program. In [9], atomic norms are further studied in the
line spectral estimation in a noisy condition. Alternatively, an iterative
reweighted method for off-grid CS is proposed in [10], where the sparse signals
and the unknown parameters used to construct the true dictionary are jointly
estimated.

As soon as the concept of the off-grid CS emerged, it was quickly applied to DOA
and localization problems. In [11], joint sparsity reconstruction methods is used
to estimate the DOA by exploring the underlying structure between the sparse
signal and the gird mismatch for off-grid targets. The idea is to approximate the
measurement matrix by using the first order Taylor expansion around the
predefined grid. Atomic norm based off-grid CS is used to the same DOA problem in
[12], which achieves high-resolution DOA estimation through the polynomial
rooting method. The counting and localization problem for off-grid targets in
wireless sensor networks is also formulated to a sparse recovery problem [13],
where the true and unknown sparsifying dictionary is approximated with its first
order Taylor expansion around a known dictionary. To locate multiple sources
through TDOA measurements, [14] proposes a new Bayesian learning method for cases
where the off-grid error is considered.
\subsection{The main contribution of this paper}
In this paper, we propose a novel localization algorithm and its theoretical analysis which is designed for multi-source passive localization. To summarize, the main contribution of this paper is as follows,

\romannumeral1. By the array covariance matrix based on off-grid CS, a novel sparse recovery algorithm which has the ability of
estimating multiple targets positions simultaneously. A better localization performance can be achieved by fewer antenna elements and fewer observing positions compared to the traditional BOL methods.

\romannumeral2. Due to the difficult caused by $\ell_0$-minimization, we use arc-tangent function to replace the original $0$-norm. In order to ensure that both the NP-HARD original model and the continuous alternative optimization model share the same sparse solution, we prove the equivalence between these two models which can explain the reason theoretically why the proposed method performs better than classic methods.

In the proposed method, the measurement of the localization formulation is obtained by stacking all the
estimated array covariance matrices from different observing positions together.
To begin with, an on-grid localization is achieved simply by sparsely
representing the intercepted signals in the spatial domain with an over-complete
basis. Convex optimization is then used to solve the constrained ${\ell _1}$-norm
minimization problem.

However, to get a more precise target position, much finer grids are needed,
such that the size of the dictionary will increase dramatically, especially in a
three dimensional positioning problem. With the grids getting more and more fine,
this approach may suffer from a prohibitively high computational complexity, and
the coherence between the atoms of the dictionary increases. Enlightened by
off-grid CS in [10] and the applications, an iterative reweighted algorithm with
majorization-minimization (MM) is introduced to the covariance matrixes based
direct localization method. The MM technique replaces the atan-sum objective
function in each iteration by a quadratic convex function which can be minimized
easily. The objective function, atan-sum, is more similar to ${\ell _0}$-norm,
and more sparsity encouraging than the log-sum function. Therefore, the main contribution of this paper is to prove the equivalence relationship between ${\ell _0}$-norm and atan-sum objective function.

The remaining parts of the paper are organized as follows. In Section II, we
begin with the definition of the system model. The on-grid sparsified
localization model is given, and convex optimization is used to solve the
problem. Section III proposes off-grid compressive sensing based localization.
Section IV provides some simulation examples to verify the performance of the
proposed algorithm. Finally, Section V concludes the paper. In the Appendix, the
derivation of the cost function with respect to the coordinate is calculated, and
the CRB of the localization problem is given.

\begin{figure}
\centering
\includegraphics[width=0.7\textwidth]{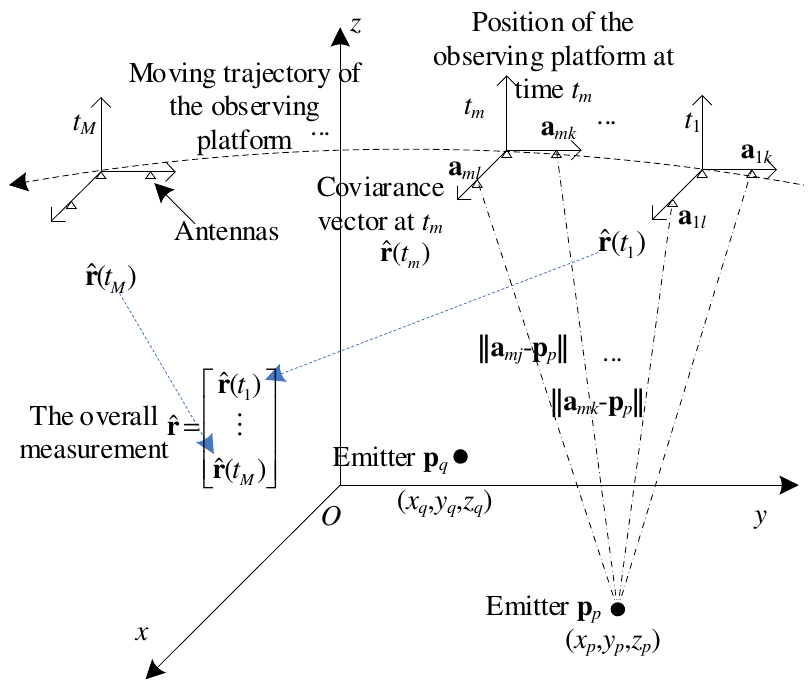}
\caption{The localization model.}
\label{1}
\end{figure}
\subsection{Notations}
In this paper, diag( ) denotes making a diagonal matrix with the elements of the
vector. The notations $|| \cdot ||{}_1$, $|| \cdot ||$ denote the ${\ell _1}$,
and ${\ell _2}$-norm of the matrix, respectively. The superscripts $^{\rm{*}}$,
$^T$, $^H$ denote the conjugation, transposition, conjugate transposition of the
matrix, respectively. The notation vec( ) denotes the vectorization operator of a
matrix.

%章节2
\section{Covariance sparse representation based multi-target localization}
We look into a localization problem for multiple targets from a moving platform.
As an example shown in Fig. 1, an observing platform, such as a satellite or an
unmanned aerial vehicle, moves along its trajectory. A planar antenna array is
installed on the moving platform to intercept the signals transmitted from the
emitters. The real-time positions of the platform can be obtained by its own
positioning devices, such as a GPS receiver. The task of the localization problem
is to estimate the positions of the targets.

In the traditional ways, the positions of the targets are obtained using a BOL
or the phase difference rate. Here, we propose a new signal model to solve the
localization problem.

Without loss of generality, the coordinate system of the moving platform is the
same with geodetic coordinate system all the time for simplicity. The signals
could not be intercepted all the time, but only are available at discontinuous
slow time instants ${t_m}$, $m = 1,...,M$, because of the noncooperative
receiver. We do not require that the signals are coherent for all the slow time
instants because the crystal on the receiver of the moving platform could not
keep stable for a long period. Instead, we just need to calculate the covariant
matrix of the signals from the antenna array only at the instant ${t_m}$, $m =
1,...,M$, respectively.

The planar array consists of \textit{L} antenna elements. Let ${{\bf{a}}_{ml}}$
denote the position of the antenna \textit{l}, \textit{l}=1,\ldots{},\textit{L}
at the instant ${t_m}$, $m = 1,...,M$. The antenna array is either uniformly or
randomly distributed, while the latter can break through the half-wavelength
limitation of the array, eliminate the grating lobe, and result in improved
property for the over-complete dictionary and offer more robust reconstruction.
The receiver intercepts the signals emitted from the target position
${{\bf{p}}_p}: = {[{x_p},{y_p},{z_p}]^T}$, $p = 1,...,K$, where \textit{K} is the
number of the emitters. The task of the localization is to estimate
${{\bf{p}}_p}$ from the signals impinging on the array at several instants
${t_m}$, $m = 1,...,M$. Here, the time instants ${t_m}$ are called the slow time.

The \textit{K} narrowband signals, ${{\bf{s}}_{\rm{r}}}({t_{m,n}}) \in {^L}$,
impinging on the \textit{L} antennas are received during a short period at time
${t_m}$ containing \textit{N} snapshots, which can be expressed as:
\begin{equation}
{{\bf{s}}_{\rm{r}}}({t_{m,n}}) = {\bf{A}}({t_m}){\bf{s}}({t_{m,n}}) +
{{\bf{n}}_{\rm{r}}}({t_{m,n}})
\label{eq:1}
%(1)
\end{equation}
where ${\bf{s}}({t_{m,n}}) \in {^{K \times 1}}$ is complex envelope of the
incoherent signals transmitted from \textit{K} emitters, ${t_{m,n}},n = 1,...,N$
is the fast time\textit{ }instant. Assume that the interval between the slow time
${t_m}$ and ${t_{m + 1}}$ is much greater than the period of \textit{N} snapshots
lasting in the fast time. The entries of ${{\bf{s}}_{\rm{r}}}({t_{m,n}})$ are the
complex envelopes of the summation of all the signals received on the antennas at
the time instant ${t_{m,n}}$. The noise vector ${{\bf{n}}_{\rm{r}}}({t_{m,n}})$
is the additive circular complex Gaussian white noise with zero mean and variance
$\sigma _n^2$, which is uncorrelated with ${\bf{s}}({t_{m,n}})$. The notation
${\bf{A}}({t_m}) \in {{\bf{C}}^{K \times L}}$ is the array-steering matrix at
time ${t_m}$, and it may be varying with time ${t_m}$ because of the moving
measurement platform. However, we assume that ${\bf{A}}({t_m})$ is constant
within the period of \textit{N} snapshots lasting in the fast time, which is
defined as

\begin{equation}
{\bf{{\rm A}}}({t_m}) = [{{\bf{\alpha }}_1}({t_m}),{\kern 1pt} {\kern 1pt}
{{\bf{\alpha }}_2}({t_m}){\kern 1pt} {\kern 1pt} ,{\kern 1pt} {\kern 1pt}
...{\kern 1pt} {\kern 1pt} {\kern 1pt} {\kern 1pt} ,{{\bf{\alpha }}_K}({t_m})]
\label{eq:2}
%(2)
\end{equation}
where
\[
{{\bf{\alpha }}_p}({t_m}) = {[1{\kern 1pt} {\kern 1pt} {\kern 1pt} {\kern 1pt}
{e^{ - j2\pi f(||{{\bf{a}}_{m2}} - {{\bf{p}}_p}|| - ||{{\bf{a}}_{m1}} -
{{\bf{p}}_p}||)/c}}...{e^{ - j2\pi f(||{{\bf{a}}_{mL}} - {{\bf{p}}_p}|| -
||{{\bf{a}}_{m1}} - {{\bf{p}}_p}||)/c}}]^T},
\]
is the array-steering vector of frequency $f$ and is with respect to the
emitter's position ${{\bf{p}}_p}$. As illustrated in Fig. 1, ${{\bf{a}}_{ml}}$ is
the position of the \textit{l}th antenna at the time instant ${t_m}$, and
$||{{\bf{a}}_{ml}} - {{\bf{p}}_p}||$ is the distance between the \textit{l}th
antenna and the \textit{p}th emitter. Different from the traditional approaches
in [5] and [6], the array-steering matrix here is more directly and conveniently
defined with respect to the emitter's position ${{\bf{p}}_p}$, without any
approximation. This definition is easier to calculate the derivations, which will
be given in the Appendix.

Thus, the array covariance matrix of the received signals at time ${t_m}$ is
given by
\begin{equation}
{\bf{R}}({t_m}) = E[{{\bf{s}}_{\rm{r}}}({t_m}){\bf{s}}_{\rm{r}}^H({t_m})] =
{\bf{A}}({t_m}){\bf{\mathcal{S} }}{{\bf{A}}^H}({t_m}) + \sigma _n^2{{\bf{I}}_L}
\label{eq:3}
%(3)
\end{equation}
where E[ ] denotes the expectation operator, and ${{\bf{I}}_L}$is the identity
matrix of order $L \times L$. The source covariance matrix ${\bf{\mathcal{S} }}$ is
diagonal defined as ${\bf{\mathcal{S} }} = E[{\bf{s}}({t_m}){{\bf{s}}^H}({t_m})] =
$$diag({[\sigma _1^2,...,\sigma _K^2]^T})$, where $\sigma _1^2,...,\sigma _K^2$
denote the power of the signals.

Let ${\bf{r}}({t_m})$ be the covariance vector defined as the vectorized
covariance matrix ${\bf{r}}({t_m}) = vec({\bf{R}}({t_m}))$. From \eqref{eq:3}, we
get [15]

\begin{equation}
{\bf{r}}({t_m}) = {\bf{\Psi }}({t_m}){\boldsymbol{\rho}} + \sigma _n^2vec({{\bf{I}}_L})
\label{eq:4}
%(4)
\end{equation}
where
\begin{equation}
{\bf{\Psi }}({t_m}): = [{\bf{\alpha }}_{{{\bf{p}}_1}}^*({t_m}) \otimes
{{\bf{\alpha }}_{{{\bf{p}}_1}}}({t_m}),...,{\bf{\alpha }}_{{{\bf{p}}_K}}^*({t_m})
\otimes {{\bf{\alpha }}_{{{\bf{p}}_K}}}({t_m})]
\label{eq:5}
%(5)
\end{equation}
${\boldsymbol{\rho}}: = {[\sigma _1^2,...,\sigma _K^2]^T}$, and $ \otimes $ denotes the
Kronecker product.

Equation \eqref{eq:4} is just a theoretical model of the array covariance vector.
However, in reality, ${\bf{r}}({t_m})$ must be estimated by

\begin{equation}
{\bf{\hat r}}({t_m}) = vec[{\bf{\hat R}}({t_m})] = vec\left[
{\frac{1}{N}\sum\limits_{n = {\rm{1}}}^N
{{{\bf{s}}_{\rm{r}}}({t_{m,n}}){\bf{s}}_{\rm{r}}^H({t_{m,n}})} } \right]
\label{eq:6}
%(6)
\end{equation}
Therefore, ${\bf{\hat r}}({t_m})$ is not an ideal covariance vector any more,
while it contains some noise terms, such as the signal-to-signal and the
signal-to-noise cross-terms. Thus, the model can be rewritten as
\begin{equation}
{\bf{\hat r}}({t_m}) = {\bf{\Psi }}({t_m}){\boldsymbol{\rho}} + \sigma
_n^2vec({{\bf{I}}_L}) + {\bf{n}}({t_m})
\label{eq:7}
%(7)
\end{equation}
where ${\bf{n}}({t_m})$ is the estimation error following the asymptotically
Gaussian distribution [16]

%\iffalse
\begin{equation}
{\bf{n}}({t_m})\sim(0,\frac{1}{N}{{\bf{R}}^T}({t_m}) \otimes {\bf{R}}({t_m}))
\label{eq:8}
%(8)
\end{equation}
%\fi

Obviously, ${\boldsymbol{\rho}}$ cannot be solved from a measurement of a single
observation position. Therefore, we use all the measurements of different time
instants from different observing positions. Thus, we stack all the measurements
at all the available instants to form a complete model
\begin{equation}
{\bf{\hat r}} = {\bf{\Psi \rho }} + {\bf{n}}
\label{eq:9}
%(9)
\end{equation}
where
\[
{\bf{n}} = {[\begin{array}{*{20}{c}}
{{{\bf{n}}^T}({t_1})}& \cdots &{{{\bf{n}}^T}({t_M})}
\end{array}]^T},
\]

\[
{\bf{\hat r}} = \left[ {\begin{array}{*{20}{c}}
{{\bf{\hat r}}({t_1})}\\
\cdots \\
{{\bf{\hat r}}({t_M})}
\end{array}} \right] - \sigma _n^2\left[ {\begin{array}{*{20}{c}}
{vec({{\bf{I}}_L})}\\
\cdots \\
{vec({{\bf{I}}_L})}
\end{array}} \right],
\]

\begin{equation}
{\bf{\Psi }} = \left[ {\begin{array}{*{20}{c}}
{{\bf{\Psi }}({t_1})}\\
\vdots \\
{{\bf{\Psi }}({t_M})}
\end{array}} \right]
\label{eq:10}
%(10)
\end{equation}
The construction of the measurement ${\bf{\hat r}}$ is also illustrated as in
Fig. 1. In the measuring model of \eqref{eq:9} for multiple instants, the estimation
error of the covariance vector also follows the asymptotically Gaussian
distribution, that is,
\begin{equation}
{\bf{n}}\sim(0,{\bf{C}})
\label{eq:11}
%(11)
\end{equation}
with the covariance matrix
\begin{equation}
{\bf{C}} = \frac{1}{N}\left[ {\begin{array}{*{20}{c}}
{{{\bf{R}}^T}({t_1}) \otimes {\bf{R}}({t_1})}&{}&{}\\
{}& \ddots &{}\\
{}&{}&{{{\bf{R}}^T}({t_M}) \otimes {\bf{R}}({t_M})}
\end{array}} \right]
\label{eq:12}
%(12)
\end{equation}

%第三章
\section{Off-grid Compressive Sensing Based Localization}
%3.1
\subsection{Off-grid Localization Model}
In the classical algorithms, the positions of candidate emitters are assumed to lie
on some fixed discrete grids, so that the conventional compressive sensing can
take into effect. The continuous space has to be discretized to a finite set of
points, and signals are represented sparsely in a fixed known dictionary.
However, in practical localization problems, the emitters do not necessarily lie
on these grids exactly. For the purpose of obtaining a higher estimation
precision, the grids are needed to be refined. However, the refined grids will
increase the coherence of the measurement matrix, which
will in turn limit the performance of the algortihm [8].

Thus, the dictionary should be represented by the positions in a continuous domain. The dictionary can’t be preset any more. The off-grid localization becomes a joint dictionary learning and emitter position estimation problem.

Without dividing the target area into uniform grids, the signal model of \eqref{eq:9} should
be modified to:

\begin{equation}
{\bf{\hat r}} = {\bf{\Psi }}(\mathcal{S}){\boldsymbol{\rho}} + {\bf{n}}
\label{eq:17}
%(17)
\end{equation}
where ${\bf{\Psi }}(\mathcal{S}): = [{\boldsymbol{\psi }}({{\bf{p}}_1}),...,{\boldsymbol{\psi
}}({{\bf{p}}_K})] \in {^{{L^2}M \times P}}$, ${\bf{\hat r}}$, ${\boldsymbol{\rho}}$ and
${\bf{n}}$ are the same as defined in \eqref{eq:9}. Define a set of the emitters'
positions, $\mathcal{S}: = \{ {{\bf{p}}_p}{\rm{\} }}_{p = 1}^P$, and ${\boldsymbol{\psi
}}({{\bf{p}}_p})$, the column of ${\bf{\Psi }}(\mathcal{S})$, is the function of a position
${{\bf{p}}_p}$ in the continuous domain.

Thus, the off-grid localization problem is not only estimating the signal power,
but also adjusting the position parameters of the atoms, ${\boldsymbol{\psi
}}({{\bf{p}}_p})$, to attain their true values. Off-grid CS offers improved
resolution and robust performance because of the sparsity constraint. The goal is
to search for the unknown dictionary composed of as few atoms as possible.
Thus, the off-grid localization problem can expressed as the following $l_0^{\lambda}$-minimization:

\begin{equation}
\mathop {\min }\limits_{\mathcal{S},{\boldsymbol{\rho}}} \quad
\lambda ||\boldsymbol{\rho}||_0
+ ||{\bf{\hat r}} - {\bf{\Psi }}(\mathcal{S}){\boldsymbol{\rho}}|{|^2}
\label{eq:18}
\end{equation}
\subsection{Sparse recovery model and its alternative model}
Because $\ell_0$-norm is a discrete integer function, $\ell_0^{\lambda}$-minimization is a NP-HARD problem \cite{x1}, so we consider the following alternative function $g_\delta(\cdot)$ to replace $\ell_0$-norm:
\begin{equation}
\mathop {\min }\limits_{\mathcal{S},{\boldsymbol{\rho}}} \quad \mathcal{G}(\mathcal{S},{\boldsymbol{\rho}}): = \lambda g({\boldsymbol{\rho}})
+ ||{\bf{\hat r}} - {\bf{\Psi }}(\mathcal{S}){\boldsymbol{\rho}}|{|^2},
\label{eq:19}
%(19)
\end{equation}
The objective function 
\begin{eqnarray}
g_\delta({\boldsymbol{\rho}}) = \sum\limits_{p = 1}^P {g_c(\rho_p)} = \sum\limits_{p = 1}^P {atan|\rho_p/\delta |}
\end{eqnarray}
can be separated to a summation of several objective function, where ${\rho _p}$ represents the
\textit{p}th entry of ${\boldsymbol{\rho}}$. The scalar function ${atan|\rho
/\delta |}$ is sign
invariant and concave-and-monotonically increasing on the non-negative orthant
${\mathcal{O}_1}$. $\lambda  > 0$ is a regularization parameter to provide a trade-off between
fidelity to the measurements and sparsity in the optimal solution.

However, the minimization problem \eqref{eq:18} and \eqref{eq:19} are difficult to solve since there are two optimization variables $\mathcal{S}$ and $\boldsymbol{\rho}$, so we adopt alternating optimization strategy to solve this problem. Among the processing of alternating optimization, the update of $\boldsymbol{\rho}$ is the most important aspect, since a accurate solution of the present sparse model can provide a more reasonable modification of the grid point set $\mathcal{S}$. 

In sparse recovery theory, the main algorithms designed for solve model (\ref{eq:18}) can be divided into two categories, greedy algorithms, such as OMP and convex relaxation methods, such as $l_1$-minimization. Although greedy algorithms are designed to solve model (\ref{eq:18}) directly, due to the fact that model (\ref{eq:18}) is a NP-HARD problem \cite{x1}, these algorithms only performance well with a low level. Furthermore, convex relaxation methods need the measurement matrix ${\bf{\Psi }}(\mathcal{S}$ to meet the Restricted Isometry Property (RIP). A matrix $A$ is said to satisfy RIP of order $2k$ if and only if there exists a constant $\delta _{2k}\in (0,1)$ such that
\begin{eqnarray}
(1-\delta_{2k})\|x\|_2^2\leq \|Ax\|_2^2 \leq (1+\delta_{2k})\|x\|_2^2
\end{eqnarray}
for any $2k$ sparse vector $x$. In \cite{x3}, it has been proved that $\delta_{2k}\leq \sqrt{2}/2$ is the theoretical optimal conditions to recover the real sparse solution by $l_1$-minimization. However, to verify RIP for a given matrix ${\bf{\Psi }}(\mathcal{S}$ also is a NP-HARD problem \cite{x4}. Therefore, at each iteration, it is important to ensure $g_\delta({\boldsymbol{\rho}})$ can recover the real sparse solution without any request of ${\bf{\Psi }}(\mathcal{S})$.
Furthermore, the following theorem offer us a guarantee for the alternative function $g_\delta({\boldsymbol{\rho}})$ for sparse recovery.
\begin{theorem}\label{theorem1}
For a given $\mathcal{S}$, there exist a constant $\delta(\mathcal{S},\lambda)$ such that both of the following minimization problems 
\begin{equation}
\mathop {\min }\limits_{{\boldsymbol{\rho}}} \quad
\lambda ||\boldsymbol{\rho}||_0
+ ||{\bf{\hat r}} - {\bf{\Psi }}(\mathcal{S}){\boldsymbol{\rho}}|{|^2}
\end{equation}
and
\begin{equation}
\mathop {\min }\limits_{{\boldsymbol{\rho}}} \quad \mathcal{G}(\mathcal{S},{\boldsymbol{\rho}}): = \lambda g({\boldsymbol{\rho}})
+ ||{\bf{\hat r}} - {\bf{\Psi }}(\mathcal{S}){\boldsymbol{\rho}}|{|^2}
\end{equation}
share the same sparse solutions whenever $0<\delta<\delta(\mathcal{S},\lambda)$.
\end{theorem}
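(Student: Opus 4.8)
The plan is to exploit that, with $\mathcal{S}$ fixed, $\boldsymbol{\Psi}(\mathcal{S})$ is a fixed matrix, so both problems reduce to sparse-regularized least squares in the single variable $\boldsymbol{\rho}$ and differ only through the penalty. The elementary fact driving everything is the pointwise limit
\[
\lim_{\delta\to0^+}\mathrm{atan}|\rho_p/\delta|=\tfrac{\pi}{2}\,\mathbf{1}[\rho_p\neq0],
\]
so that $g_\delta(\boldsymbol{\rho})\to\tfrac{\pi}{2}\|\boldsymbol{\rho}\|_0$ for every fixed $\boldsymbol{\rho}$. Since $\boldsymbol{\rho}$ has only finitely many possible supports $T\subseteq\{1,\dots,P\}$, I would organise the argument around this finite collection and pass from pointwise convergence of the penalty to convergence of the minimizers by a gap argument.

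First I would carry out a per-support reduction. For each $T$ let $\mathrm{res}(T):=\min_{\mathrm{supp}(\boldsymbol{\rho})\subseteq T}\|\hat{\mathbf{r}}-\boldsymbol{\Psi}(\mathcal{S})\boldsymbol{\rho}\|^2$ denote the least-squares residual, attained at $\boldsymbol{\rho}^{\mathrm{LS}}_T$. The $\ell_0$ objective restricted to $T$ equals $F_0(T):=\lambda|T|+\mathrm{res}(T)$, while the atan objective restricted to $T$ has minimal value $F_\delta(T)$. I would then show $F_\delta(T)\to\tfrac{\pi}{2}\lambda|T|+\mathrm{res}(T)$ as $\delta\to0^+$: since $\tfrac{d}{d\rho}\,\mathrm{atan}|\rho/\delta|=\delta/(\delta^2+\rho^2)\to0$ for fixed $\rho\neq0$, the penalty becomes asymptotically flat away from the origin, so the restricted minimizer converges to $\boldsymbol{\rho}^{\mathrm{LS}}_T$ (assuming the submatrix $\boldsymbol{\Psi}_T(\mathcal{S})$ has full column rank, a mild genericity condition) and its penalty converges to $\tfrac{\pi}{2}\lambda|T|$. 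Because there are finitely many supports, this convergence is uniform in $T$.

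Next I would select the support by a gap argument. Let $T^\star$ be the minimizing support of the $\ell_0$ problem, assumed unique, with solution $\boldsymbol{\rho}^\star=\boldsymbol{\rho}^{\mathrm{LS}}_{T^\star}$, and set $\gamma:=\min_{T\neq T^\star}\big[\tfrac{\pi}{2}\lambda|T|+\mathrm{res}(T)\big]-\big[\tfrac{\pi}{2}\lambda|T^\star|+\mathrm{res}(T^\star)\big]$. Provided $T^\star$ also minimizes $\tfrac{\pi}{2}\lambda|T|+\mathrm{res}(T)$ we have $\gamma>0$, and choosing $\delta(\mathcal{S},\lambda)$ small enough that $|F_\delta(T)-(\tfrac{\pi}{2}\lambda|T|+\mathrm{res}(T))|<\gamma/2$ for every $T$ forces the global atan-minimizer to have support exactly $T^\star$, with value converging to $\boldsymbol{\rho}^\star$ by the previous step. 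Hence the two problems share the same sparse solution whenever $0<\delta<\delta(\mathcal{S},\lambda)$.

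The one genuinely delicate point — and the step I expect to be the main obstacle — is the factor $\pi/2$: the atan penalty limits to $\tfrac{\pi}{2}\lambda\|\cdot\|_0$, so the atan problem asymptotically selects the support minimizing $\tfrac{\pi}{2}\lambda|T|+\mathrm{res}(T)$ rather than $\lambda|T|+\mathrm{res}(T)$. The conclusion is therefore exact only when these two weighted $\ell_0$ problems share an optimal support, i.e. when $T^\star$ is stable under rescaling the penalty weight from $\lambda$ to $\tfrac{\pi}{2}\lambda$; this holds automatically in the regime of interest, where $T^\star$ is the unique sparsest (near-)exact representation of $\hat{\mathbf{r}}$, since then the sparsest support is optimal for every positive weight. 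Establishing or hypothesising this support-stability, rather than the limiting computation itself, is the crux; the remaining points (existence of minimizers despite the boundedness, hence non-coercivity, of $g_\delta$ when $\boldsymbol{\Psi}(\mathcal{S})$ is rank-deficient, and uniqueness of $T^\star$ for a strictly positive gap) are routine.
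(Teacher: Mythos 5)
Your strategy is sound but takes a genuinely different route from the paper's. You work directly on the two regularized problems: enumerate the finitely many supports $T$, show the restricted atan objective $F_\delta(T)$ converges (uniformly over the finite family) to a weighted-$\ell_0$-plus-residual value, and finish with a gap argument. The paper instead chains through the \emph{equality-constrained} problems: its appendix Theorem 5 passes from the regularized atan problem to the constrained atan problem using first-order optimality conditions (Lemma 3-1), an $\ell_\infty$ bound on minimizers (Lemma 3-2) and a smallest-singular-value condition on $\boldsymbol{\Psi}$; its Theorem 4 passes from constrained atan to constrained $\ell_0$ — this is where your finitely-many-supports idea appears, restricted via Lemma 2-1 to supports whose submatrices have full column rank; and a cited Theorem 3 closes the loop between constrained $\ell_0$ and the regularized $\ell_0^{\lambda}$ problem for $\lambda$ below some $\lambda_0$. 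Your route is more elementary (no optimality conditions or singular-value estimates) and, importantly, it makes explicit the real obstruction: since $\mathrm{atan}|\rho/\delta|\to\frac{\pi}{2}$ off the origin, the atan problem asymptotically solves the $\ell_0$ problem with weight $\frac{\pi}{2}\lambda$, so the stated equivalence needs the optimal support to be stable under $\lambda\mapsto\frac{\pi}{2}\lambda$. The paper's proof carries the same tension implicitly and less cleanly — its Theorem 3 requires $\lambda<\lambda_0$ while its Theorem 5 requires $\lambda>2\|\mathbf{b}\|_2^2/\pi$, and these are never shown compatible. Two corrections on your side: your assertion that the sparsest exact support is optimal \emph{for every} positive weight is false (for large weights the empty support wins); it holds only for all sufficiently small weights, which is the regime both arguments actually need. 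And $F_\delta(T)$ converges to $\min_{T'\subseteq T}\bigl[\frac{\pi}{2}\lambda|T'|+\mathrm{res}(T')\bigr]$ rather than to $\frac{\pi}{2}\lambda|T|+\mathrm{res}(T)$ whenever the restricted least-squares solution vanishes somewhere on $T$; this does not damage the global gap argument but must be stated when you identify the support of the limiting minimizer.
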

\begin{proof}
See Appendix for more details.
\end{proof}

%3.2
\subsection{Iterative Reweighted MM Algorithm }
However, the complicated objective functions render the direct problem solving
intractable. A simple iterative algorithm is proposed here by exploiting the MM
technique to minimize the objective function of \eqref{eq:19} by using a
surrogate convex function. In each iteration, the algorithm alternates between
estimating ${\boldsymbol{\rho}}$ and refining the location set ${\mathcal{S}}$.

The goal of the proposed algorithm is to find a convex smooth function
${f_{\rm{c}}}(\rho )$ to replace ${g_{\rm{c}}}(\rho )$ in an iterative way, and
${f_{\rm{c}}}(\rho )$ should be chosen for easy minimizing. According to the MM
technique, the surrogate function should be selected satisfying two conditions.
One is it should majorize the original function, the other is its global
minimization must guarantee the original objective function reducing.

Many methods are proposed for constructing surrogate objective functions. One of them is construction by second order Taylor expansion [21]. There exists a value   such that  , making the following inequality hold [20]:

\begin{equation}
\begin{array}{l}
{g_{\rm{c}}}(\rho ) \le \\
{g_{\rm{c}}}({\rho ^{(i)}}) + {\left. {\frac{{\partial {g_{\rm{c}}}(\rho
)}}{{\partial \rho }}} \right|_{\rho  = {\rho ^{(i)}}}}(\rho  - {\rho ^{(i)}}) +
\frac{1}{2}{\beta ^{(i)}}{(\rho  - {\rho ^{(i)}})^2}
\end{array}
\label{eq:20}
%(20)
\end{equation}
where ${\partial {g_{\rm{c}}}(\rho ) \over
{\partial \rho }} = {\delta\rho\over {|\rho|(\delta^2+\rho^2)}}$ is the gradient of ${g_{\rm{c}}}({\rho})$. Hence, we can define a surrogate function for the \textit{i}th iteration:

\begin{equation}
\begin{array}{l}
{f_{\rm{c}}}(\rho |{\rho ^{(i)}}) = \\
{g_{\rm{c}}}({\rho ^{(i)}}) + {\left. {\frac{{\partial {g_{\rm{c}}}(\rho
)}}{{\partial \rho }}} \right|_{\rho  = {\rho ^{(i)}}}}(\rho  - {\rho ^{(i)}}) +
\frac{1}{2}{\beta ^{(i)}}{(\rho  - {\rho ^{(i)}})^2}
\end{array}
\label{eq:21}
%(21)
\end{equation}

From \eqref{eq:18}, we notice that $g({\boldsymbol{\rho}})$ is a summation of
${g_{\rm{c}}}(\rho )$, so the surrogate function of $g({\boldsymbol{\rho}})$ for the
\textit{i}th iteration, denoted by $f({\boldsymbol{\rho}}|{{\boldsymbol{\rho}}^{(i)}})$, can be
derived from the above scalar component ${f_{\rm{c}}}(\rho |{\rho ^{(i)}})$:

\begin{equation}
\begin{array}{l}
f({\boldsymbol{\rho}}|{{\boldsymbol{\rho}}^{(i)}}) = \sum\limits_{p = 1}^P {{f_{\rm{c}}}({\rho
_p}|\rho _p^{(i)})} \\
= g({{\boldsymbol{\rho}}^{(i)}}) + {({\boldsymbol{\rho}} - {{\boldsymbol{\rho}}^{(i)}})^T}{\left.
{\frac{{\partial g({\boldsymbol{\rho}})}}{{\partial {\boldsymbol{\rho}}}}} \right|_{{\bf{\rho
}} = {{\boldsymbol{\rho}}^{(i)}}}} + {({\boldsymbol{\rho}} - {{\bf{\rho
}}^{(i)}})^T}{{\bf{B}}^{(i)}}({\boldsymbol{\rho}} - {{\boldsymbol{\rho}}^{(i)}})
\end{array}
\label{eq:22}
%(22)
\end{equation}
where ${\left. {{\textstyle{{\partial g({\boldsymbol{\rho}})} \over {\partial {\bf{\rho
}}}}}} \right|_{{\boldsymbol{\rho}} = {{\boldsymbol{\rho}}^{(i)}}}} = {[{\left.
{{\textstyle{{\partial {g_{\rm{c}}}(\rho )} \over {\partial \rho }}}}
\right|_{\rho  = \rho _1^{(i)}}},...,{\left. {{\textstyle{{\partial
{g_{\rm{c}}}(\rho )} \over {\partial \rho }}}} \right|_{\rho  = \rho
_P^{(i)}}}]^T}$ is the gradient of $g({\boldsymbol{\rho}})$ at ${\boldsymbol{\rho}} =
{{\boldsymbol{\rho}}^{(i)}}$, and

\begin{equation}
{{\bf{B}}^{(i)}} = diag({\textstyle{{\rm{1}} \over {\rm{2}}}}\beta
_1^{(i)},...,{\textstyle{{\rm{1}} \over {\rm{2}}}}\beta _P^{(i)})
\label{eq:23}
%(23)
\end{equation}
at the \textit{i}th iteration. Ignoring terms irrelevant to the variable
${\boldsymbol{\rho}}$, the surrogate function becomes to the following equivalent form:

\begin{equation}
f({\boldsymbol{\rho}}|{{\boldsymbol{\rho}}^{(i)}}) = {{\boldsymbol{\rho}}^T}\left( {{{\left.
{\frac{{\partial g({\boldsymbol{\rho}})}}{{\partial {\boldsymbol{\rho}}}}} \right|}_{{\bf{\rho
}} = {{\boldsymbol{\rho}}^{(i)}}}} - 2{{\bf{B}}^{(i)}}g({{\boldsymbol{\rho}}^{(i)}})} \right) +
{{\boldsymbol{\rho}}^T}{{\bf{B}}^{(i)}}{\boldsymbol{\rho}}
\label{eq:24}
%(24)
\end{equation}
Inspired by the idea presented in [20], \eqref{eq:24} can be further simplified to
a more compact version by selecting ${{\bf{B}}^{(i)}}$ to make ${\left.
{{\textstyle{{\partial g({\boldsymbol{\rho}})} \over {\partial {\boldsymbol{\rho}}}}}}
\right|_{{\boldsymbol{\rho}} = {{\boldsymbol{\rho}}^{(i)}}}} - 2{{\bf{B}}^{(i)}}g({{\bf{\rho
}}^{(i)}}) = {\rm{0}}$. Finally, the surrogate function can be expressed as:

\begin{equation}
f({\boldsymbol{\rho}}|{{\boldsymbol{\rho}}^{(i)}}) = {{\boldsymbol{\rho}}^T}{{\bf{B}}^{(i)}}{\boldsymbol{\rho
}}
\label{eq:25}
%(25)
\end{equation}
where the diagonal elements satisfy

\begin{equation}
\beta _p^{(i)} = {{{{\left. {{\textstyle{{\partial {g_{\rm{c}}}(\rho )} \over
{\partial \rho }}}} \right|}_{\rho  = \rho _p^{(i)}}}} \mathord{\left/
{\vphantom {{{{\left. {{\textstyle{{\partial {g_{\rm{c}}}(\rho )} \over
{\partial \rho }}}} \right|}_{\rho  = \rho _p^{(i)}}}} {{g_{\rm{c}}}(\rho
_p^{(i)})}}} \right.
\kern-\nulldelimiterspace} {\rho _p^{(i)}}} = {\delta\over {|\rho|(\delta^2+\rho^2)}}
\label{eq:26}
%(26)
\end{equation}

%\begin{figure}
%\centering
%\includegraphics[width=7cm]{img-2.pdf}
%\caption{The localization model.}
%\end{figure}

Resolving the unconstrained optimization of (19) by an iterative optimization method using the surrogate function of (25), the sub-problem of (19) at the (\textit{i})th iteration can be expressed as:

\begin{equation}
\mathop {\min }\limits_{\mathcal{S},{\boldsymbol{\rho}}}\mathcal{L}(\mathcal{S},{\boldsymbol{\rho}}): = \lambda {{\boldsymbol{\rho}}^T}{{\bf{B}}^{(i)}}{\boldsymbol{\rho}} + ||{\bf{\hat r}} - {\bf{\Psi
}}(\mathcal{S}){\boldsymbol{\rho}}|{|^2}
\label{eq:27}
%(27)
\end{equation}

%3.3
\subsection{Alternatively Optimizing }

From \eqref{eq:27}, it is still intractable to optimize the objective
function respect to  and ${\boldsymbol{\rho}}$ simultaneously. Hence,  and ${\boldsymbol{\rho}}$ should be optimized alternatively at the (\textit{i}+1)th iteration. First,
suppose  is estimated at the \textit{i}th iteration, so conditioned on , taking
the complex gradient [22] of it with respect to ${\boldsymbol{\rho}}$, and setting the
gradient to zero and solving the equation, one can obtain the minimum point at
the (\textit{i}+1)th iteration as follows:

\begin{equation}
\begin{array}{c}
{{\boldsymbol{\rho}}^{(i + 1)}} = {\left( {{\mathop{\rm Re}\nolimits} \left(
{{{\bf{\Psi }}^H}({\mathcal{S}^{(i)}}){\bf{\Psi }}(\mathcal{S}{^{(i)}})} \right) +
{\lambda }{{\bf{B}}^{(i)}}} \right)^{ - 1}}\\
\cdot {\mathop{\rm Re}\nolimits} \left( {{{\bf{\Psi
}}^H}(\mathcal{S}{^{(i)}}){\bf{\hat r}}} \right)
\end{array}
\label{eq:28}
%(28)
\end{equation}
We observe that the optimal ${{\boldsymbol{\rho}}^{(i + 1)}}$ can be expressed by an
explicit solution such that the calculation complexity of this step can be
dramatically reduced compared with ${\ell _1}$ minimization method such as
described in [19].

\begin{theorem}\label{theorem2}
Given a solution ${{\boldsymbol{\rho}}^{(i)}}$ of the \textit{i}th
iteration, ${{\bf{B}}^{(i)}}$ is calculated by using \eqref{eq:26}. If a new
solution of the (\textit{i}+1)th iteration, ${{\boldsymbol{\rho}}^{(i + 1)}}$, is
obtained by \eqref{eq:28}, then the objective function of the original problem
\eqref{eq:19} satisfies:

\begin{equation}
\Gamma(\mathcal{S}{^{(i)}},{{\boldsymbol{\rho}}^{(i + 1)}}) \le \Gamma(\mathcal{S}{^{(i)}},{{\boldsymbol{\rho}}^{(i)}})
\label{eq:29}
%(29)
\end{equation}
where the equation holds only if ${{\boldsymbol{\rho}}^{(i)}}$ is a fixed point of
\eqref{eq:19}.
\end{theorem}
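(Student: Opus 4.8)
The statement is precisely the descent guarantee of a single majorization--minimization (MM) step taken in $\boldsymbol{\rho}$ with the grid set frozen at $\mathcal{S}^{(i)}$. The plan is therefore to instantiate the classical MM argument for the surrogate constructed in \eqref{eq:20}--\eqref{eq:26}. Writing the restriction of the objective of \eqref{eq:19} to a fixed grid as $\Gamma(\mathcal{S}^{(i)},\boldsymbol{\rho}) = \lambda g(\boldsymbol{\rho}) + \|{\bf{\hat r}} - {\bf{\Psi}}(\mathcal{S}^{(i)})\boldsymbol{\rho}\|^2$ and the surrogate objective as $Q(\boldsymbol{\rho}\mid\boldsymbol{\rho}^{(i)}) = \lambda f(\boldsymbol{\rho}\mid\boldsymbol{\rho}^{(i)}) + \|{\bf{\hat r}} - {\bf{\Psi}}(\mathcal{S}^{(i)})\boldsymbol{\rho}\|^2$, I would prove the two MM hypotheses --- majorization $Q(\boldsymbol{\rho}\mid\boldsymbol{\rho}^{(i)})\ge\Gamma(\mathcal{S}^{(i)},\boldsymbol{\rho})$ for every $\boldsymbol{\rho}$ and tangency $Q(\boldsymbol{\rho}^{(i)}\mid\boldsymbol{\rho}^{(i)})=\Gamma(\mathcal{S}^{(i)},\boldsymbol{\rho}^{(i)})$ --- and that the update \eqref{eq:28} is the exact minimizer of $Q(\cdot\mid\boldsymbol{\rho}^{(i)})$, after which the conclusion drops out of a three-term sandwich.

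Since the data-fidelity term is identical in $\Gamma$ and $Q$, both hypotheses reduce to statements about $f$ versus $g$, and since these separate over coordinates it suffices to treat the scalar pair $g_{\rm c},f_{\rm c}$. Tangency is immediate: setting $\rho=\rho^{(i)}$ in \eqref{eq:21} annihilates the linear and quadratic terms, giving $f_{\rm c}(\rho^{(i)}\mid\rho^{(i)})=g_{\rm c}(\rho^{(i)})$. For majorization I would establish \eqref{eq:20} globally by the reparametrization $u=\rho^2$: writing $g_{\rm c}(\rho)=\phi(\rho^2)$ with $\phi(u)=\mathrm{atan}(\sqrt u/\delta)$, a direct differentiation gives $\phi'(u)=\delta/[2\sqrt u(\delta^2+u)]$, which is decreasing, so $\phi$ is concave on $u\ge 0$ and lies below its tangent at $u=(\rho^{(i)})^2$; substituting back yields $g_{\rm c}(\rho)\le g_{\rm c}(\rho^{(i)})+\phi'((\rho^{(i)})^2)\,(\rho^2-(\rho^{(i)})^2)$, a quadratic upper bound whose curvature coefficient is exactly $\tfrac12\beta_p^{(i)}$ of \eqref{eq:26}. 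This reparametrization is the crux of the argument and, I expect, the main obstacle: it is what lets one majorize past the kink of $|\rho|$ at the origin, where $g_{\rm c}$ is nonsmooth, so that the quadratic bound holds for all $\rho$ rather than merely on a single sign-orthant; it also shows that the one weight \eqref{eq:26} simultaneously realizes the majorizer and cancels the linear term, reconciling the curvature posited in \eqref{eq:20} with the one used to pass to \eqref{eq:25}.

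Next I would verify that \eqref{eq:28} minimizes $Q(\cdot\mid\boldsymbol{\rho}^{(i)})$. Because each $\beta_p^{(i)}>0$ we have ${\bf{B}}^{(i)}\succ 0$, and together with $\mathrm{Re}({\bf{\Psi}}^H(\mathcal{S}^{(i)}){\bf{\Psi}}(\mathcal{S}^{(i)}))\succeq 0$ the simplified surrogate objective $\mathcal{L}(\mathcal{S}^{(i)},\boldsymbol{\rho})$ of \eqref{eq:27} is strictly convex and coercive; dropping the $\boldsymbol{\rho}$-independent constant relating $Q$ to $\mathcal{L}$ does not move the minimizer. Setting the complex gradient of $\mathcal{L}$ to zero and solving the normal equations produces the unique stationary point, which is precisely \eqref{eq:28}, and positive definiteness of $\mathrm{Re}({\bf{\Psi}}^H{\bf{\Psi}})+\lambda{\bf{B}}^{(i)}$ guarantees both that the inverse in \eqref{eq:28} exists and that the point is the global minimizer.

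With the three ingredients in hand the result follows from the chain
\[
\Gamma(\mathcal{S}^{(i)},\boldsymbol{\rho}^{(i+1)}) \le Q(\boldsymbol{\rho}^{(i+1)}\mid\boldsymbol{\rho}^{(i)}) \le Q(\boldsymbol{\rho}^{(i)}\mid\boldsymbol{\rho}^{(i)}) = \Gamma(\mathcal{S}^{(i)},\boldsymbol{\rho}^{(i)}),
\]
whose first step is majorization at $\boldsymbol{\rho}^{(i+1)}$, whose middle step is the minimizing property of \eqref{eq:28}, and whose last step is tangency. For the equality clause, equality throughout forces $Q(\boldsymbol{\rho}^{(i)}\mid\boldsymbol{\rho}^{(i)})=Q(\boldsymbol{\rho}^{(i+1)}\mid\boldsymbol{\rho}^{(i)})$, so $\boldsymbol{\rho}^{(i)}$ is itself a minimizer of the strictly convex $Q(\cdot\mid\boldsymbol{\rho}^{(i)})$; uniqueness then gives $\boldsymbol{\rho}^{(i+1)}=\boldsymbol{\rho}^{(i)}$, i.e. $\boldsymbol{\rho}^{(i)}$ is a fixed point of the map \eqref{eq:28} and hence a stationary point of \eqref{eq:19}, as claimed.
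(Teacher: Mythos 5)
Your proof is correct and follows essentially the same route as the paper's: the standard MM sandwich argument, i.e.\ majorization of $g$ by the quadratic surrogate, tangency at $\boldsymbol{\rho}^{(i)}$, and the fact that the update \eqref{eq:28} globally minimizes the strictly convex surrogate problem \eqref{eq:27}, whose data-fidelity term coincides with that of \eqref{eq:19}. The only substantive difference is that you actually \emph{prove} the global majorization \eqref{eq:20} with the curvature \eqref{eq:26} via the concavity of $u\mapsto \mathrm{atan}(\sqrt{u}/\delta)$ in $u=\rho^2$ (thereby handling the kink of $|\rho|$ at the origin), whereas the paper merely asserts it orthant-by-orthant from sign-invariance and a figure; your strict-convexity treatment of the equality/fixed-point clause is likewise a cleaner version of what the paper states without argument.
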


\begin{proof}
Because ${g_{\rm{c}}}(\rho )$ is sign invariant (i.e.,
${g_{\rm{c}}}(\rho ) = {g_{\rm{c}}}(|\rho |)$), and concave-and-monotonically
increasing on the non-negative orthant ${\mathcal{O}_1}$, $g({\boldsymbol{\rho}})$ is sign
invariant and concave on its ${\mathcal{O}_1}$. From \eqref{eq:22}, we can get:

\begin{equation}
g({{\boldsymbol{\rho}}^{(i)}}) = f({{\boldsymbol{\rho}}^{(i)}}|{{\boldsymbol{\rho}}^{(i)}}),{\left.
{\frac{{\partial g({\boldsymbol{\rho}})}}{{\partial {\boldsymbol{\rho}}}}} \right|_{{\bf{\rho
}} = {{\boldsymbol{\rho}}^{(i)}}}} = {\left. {\frac{{\partial f({\boldsymbol{\rho}}|{{\bf{\rho
}}^{(i)}})}}{{\partial {\boldsymbol{\rho}}}}} \right|_{{\boldsymbol{\rho}} = {{\bf{\rho
}}^{(i)}}}}
\label{eq:30}
%(30)
\end{equation}
And also because of the selection of ${{\bf{B}}^{(i)}}$ as \eqref{eq:26},
$f({\boldsymbol{\rho}}|{{\boldsymbol{\rho}}^{(i)}})$ is convex, majorizes $g({\boldsymbol{\rho}})$,
and shares the same tangent with $g({\boldsymbol{\rho}})$ at the point ${{\bf{\rho
}}^{(i)}}$ on ${_1}$ (as illustrated in Fig. 2). These are also true for the
other ${2^P} - 1$ orthants. We have:

\begin{equation}
f({{\boldsymbol{\rho}}^{(i + 1)}}|{{\boldsymbol{\rho}}^{(i)}}) - g({{\boldsymbol{\rho}}^{(i + 1)}})
\ge f({{\boldsymbol{\rho}}^{(i)}}|{{\boldsymbol{\rho}}^{(i)}}) - g({{\boldsymbol{\rho}}^{(i)}}) = 0
\label{eq:31}
%(31)
\end{equation}
where the equation holds only if ${{\boldsymbol{\rho}}^{(i)}}$ is a fixed point, i.e.

$|{{\boldsymbol{\rho}}^{(i + 1)}}| =
|{{\boldsymbol{\rho}}^{(i)}}|$. Thus, the following relationship holds:

\begin{equation}
g({{\boldsymbol{\rho}}^{(i + 1)}}) - g({{\boldsymbol{\rho}}^{(i)}}) \le f({{\boldsymbol{\rho}}^{(i +
1)}}|{{\boldsymbol{\rho}}^{(i)}}) - f({{\boldsymbol{\rho}}^{(i)}}|{{\boldsymbol{\rho}}^{(i)}})
\label{eq:32}
%(32)
\end{equation}
Because the second term of \eqref{eq:19} is the same with that of
\eqref{eq:27}, we get the conclusion of \eqref{eq:29}.
\end{proof}

%图片3（Fig2）
\begin{figure}
\centering
\includegraphics[width=0.7\textwidth]{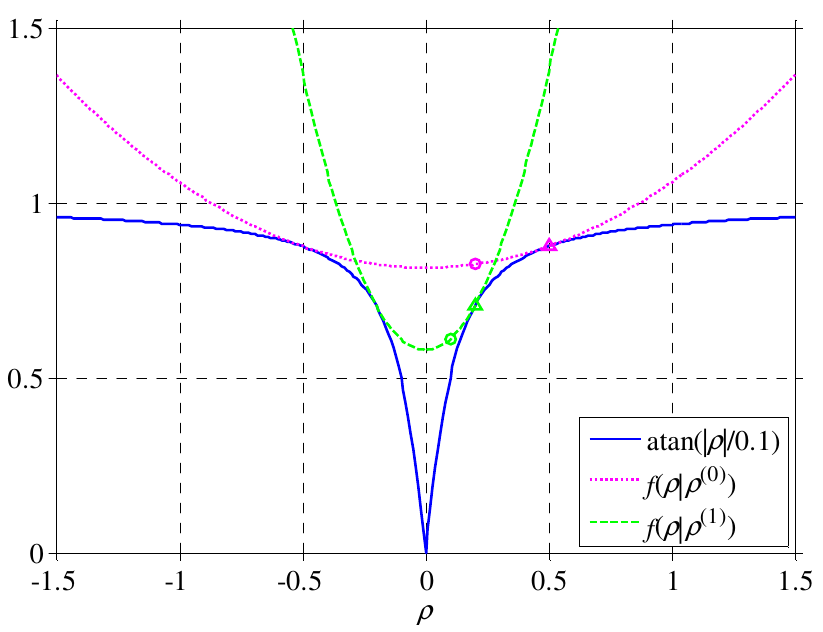}
\caption{ Illustration of the surrogate function at different iterations.}
\label{2}
\end{figure}

Theorem 2 guarantees that the surrogate function of \eqref{eq:27} can
equivalently reduce the objective function of \eqref{eq:19} in an iterative
way.

After updating ${\boldsymbol{\rho}}$ at the (\textit{i}+1)th iteration, the second step
is to adjust the parameter set, $\mathcal{S}$ , to obtain refined target positions, by
substituting \eqref{eq:28} into \eqref{eq:27}:

\begin{equation}
\begin{array}{c}
%\iffalse
\mathop {\min }\limits_ {\mathcal{S}} \mathcal{L}(\mathcal{S}) =  - 2{\mathop{\rm Re}\nolimits} \left(
{{{{\bf{\hat r}}}^H}{\bf{\Psi }}(\mathcal{S}{^{(i)}})} \right)
\cdot {\left( {{\mathop{\rm Re}\nolimits} \left( {{{\bf{\Psi
}}^H}({^{(i)}}){\bf{\Psi }}({^{(i)}})} \right) + {\lambda}{{\bf{B}}^{(i)}}} \right)^{ - 1}}
\cdot {\mathop{\rm Re}\nolimits} \left( {{{\bf{\Psi
}}^H}(\mathcal{S}{^{(i)}}){\bf{\hat r}}} \right)
%\fi
\end{array}
\label{eq:33}
%(33)
\end{equation}
From \eqref{eq:33}, one can observe that it's still hard to get an solution of $\mathcal{S}$
which minimizes the objective function $\mathcal{L}(\mathcal{S})$. Fortunately, the minimization
procedure can be replaced by simply finding an appropriate  only to reduce the
objective function instead [10]. Thus, in this step we just need to find a set
${\mathcal{S}^{(i + {\rm{1}})}}$ satisfying:

\begin{equation}
\mathcal{L}(\mathcal{S}{^{(i + {\rm{1}})}}) \le \mathcal{L}(\mathcal{S}{^{(i)}})
\label{eq:34}
%(34)
\end{equation}
Since $\mathcal{L}(\mathcal{S})$ is differentiable for the problem of \eqref{eq:33}, a gradient descent
method can be used to find such an estimation of ${^{(i + {\rm{1}})}}$. For every
element in ${\mathcal{S}^{(i)}}$, i.e., ${{\bf{p}}_p}$ if a gradient defined as
${\textstyle{{\partial \mathcal{L}(\mathcal{S})} \over {\partial {{\bf{p}}_p}}}}: =
{[{\textstyle{{\partial \mathcal{L}(\mathcal{S})} \over {\partial {x_p}}}},{\textstyle{{\partial \mathcal{L}(\mathcal{S})}
\over {\partial {y_p}}}},{\textstyle{{\partial \mathcal{L}(\mathcal{S})} \over {\partial {z_p}}}}]^T}$
can be calculated (see Appendix for more details), we can always find a scale
\textit{C} which makes

\begin{equation}
{\bf{p}}_p^{(i + 1)} = {\bf{p}}_p^{(i)} - C{\left. {\frac{{\partial
\mathcal{L}(\mathcal{S})}}{{\partial {{\bf{p}}_p}}}} \right|_{{{\bf{p}}_p} = {\bf{p}}_p^{(i)}}}
\label{eq:35}
%(35)
\end{equation}
satisfying \eqref{eq:34}. When all the elements in ${\mathcal{S}^{(i)}}$ are updated, a new
${\mathcal{S}^{(i + {\rm{1}})}}$ is found at the (\textit{i}+1)th iteration of the off-grid
localization algorithm. The complete algorithm of the off-grid localization is
described as follows.

%图片4（fig5）
\begin{figure}
\centering
\includegraphics[width=0.7\textwidth]{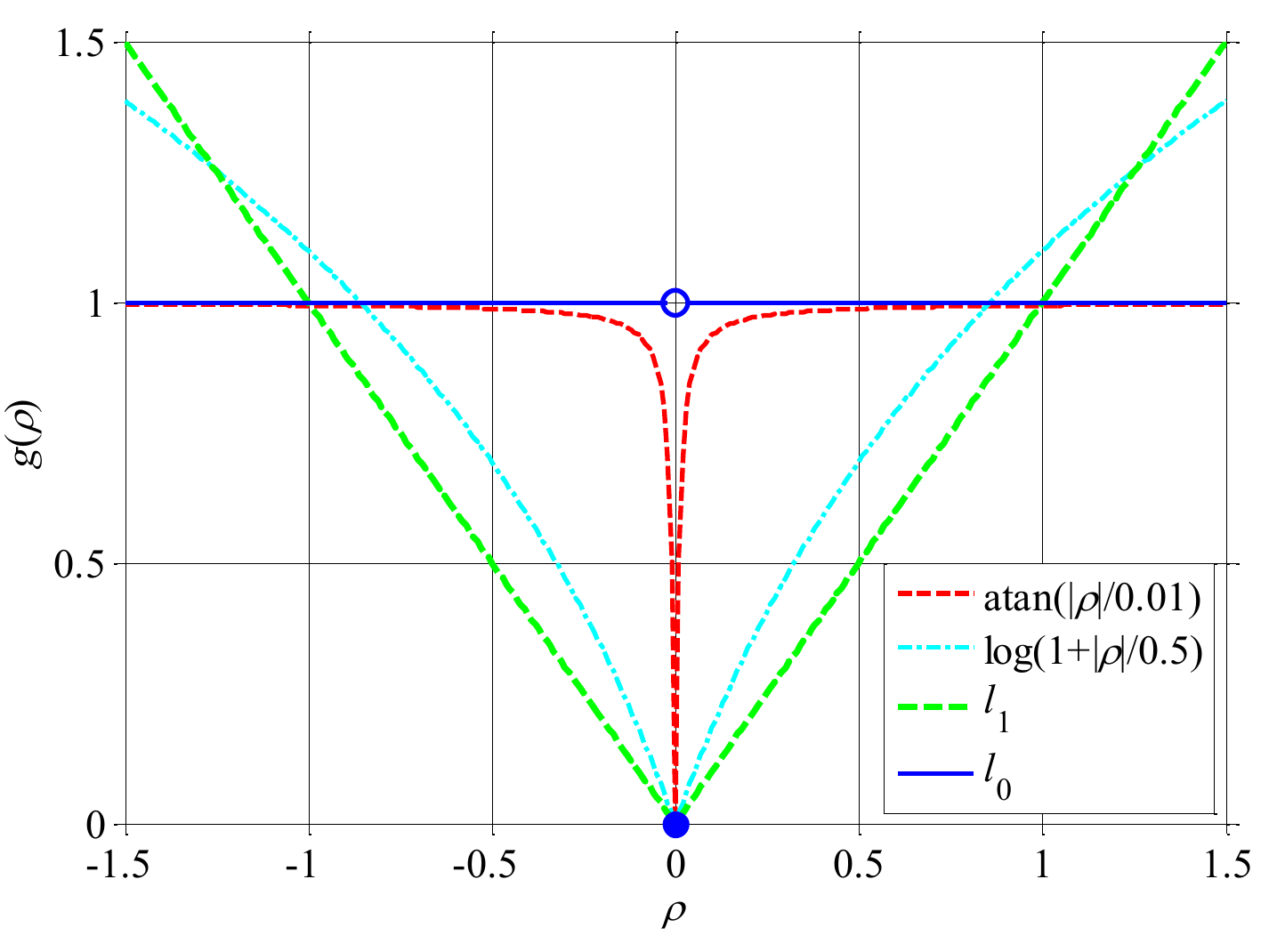}
\caption{Comparison of objective functions with a scalar variable.}
\label{3}
\end{figure}

%图片5（一个表）
\begin{algorithm}[H]                           % HERE!!!!!!!!!
\caption{Off-grid localization algorithm based on iteratively reweighted method }
\label{alg2}
\begin{algorithmic}
\REQUIRE Set $i=0$, an initial value $\mathcal{S}^{(0)}$,$\hat{\bf{\Psi }^{(0)}}$ and $\bf{B}^{(0)}=\bf{I}$
\ENSURE $\mathcal{S} $, $\boldsymbol{\rho}$
\STATE Step 1. Estimate the power $\boldsymbol{\rho}^{(i+1)}$, of signals by (24).
\STATE Step 2. Update every element by (31) to get a new $\mathcal{S}^{(i+1)}$.
\STATE Step 3. Calculate the weight $\bf{B}^{(i+1)}$ by (22)
\STATE Step 4. Pruning the set $\mathcal{S}^{(i+1)}$
\STATE Step 5. Let $i=i+1$, and repeat Step 1 through 4 when $\|\boldsymbol{\rho}^{(i+1)}-\boldsymbol{\rho}^{(i)}\|\leq \varepsilon$
\end{algorithmic}
\end{algorithm}

%3.4
\subsection{Discussions}
%3.4.1
\subsubsection{Initialize}
The proposed MM based algorithm may still converge to a local minimum, since the
original objective function is non-convex. However, the existence of the sparse
encouraging objective function with the regularization parameter $\lambda $ and
initial set ${\mathcal{S}^{({\rm{0}})}}$ properly selected will somehow partially improve
the situation.

In the initialization step of the algorithm, the localization area can be
uniformly divided into some coarse grids only fine enough to prevent the
algorithm from falling into some unpredictable local minimizations. The coarse
grids will dramatically reduce the computational complexity.

%3.4.2
\subsubsection{Pruning}
Since the localization is a typical sparse recovery problem, the computational complexity
can be reduced by pruning operations. At each iteration, those signals with small
powers can be pruned from the original set ${\mathcal{S}^{(i + {\rm{1}})}}$ and the vector
${{\boldsymbol{\rho}}^{(i + 1)}}$. A hard threshold $\tau $ can be used to decide which
signal needs to be removed, i.e., $\rho _p^{(i + 1)} < \tau $. Hence, the
dimension of the signal power ${{\boldsymbol{\rho}}^{(i + 1)}}$ and the position set
${^{(i + {\rm{1}})}}$ shrinks at some iterations.

By the definition of ${\ell _0}$-norm, the sparse original model has the ability to make energy gather on the certain location, but other continuous alternative methods, such as reweighted ${\ell _1}$-norm or log-sum penalty, need some strict condition of measurement matrices or lack the theoretical guarantee to ensure recover sparse solution. Compared with these methods, we have proved in Theorem 1 that the proposed method in this paper has the same ability as ${\ell _0}$-norm. Therefore, $atan|\rho /\delta |$ 
%图片6（Fig.4）
\begin{figure}
\centering
\includegraphics[width=0.7\textwidth]{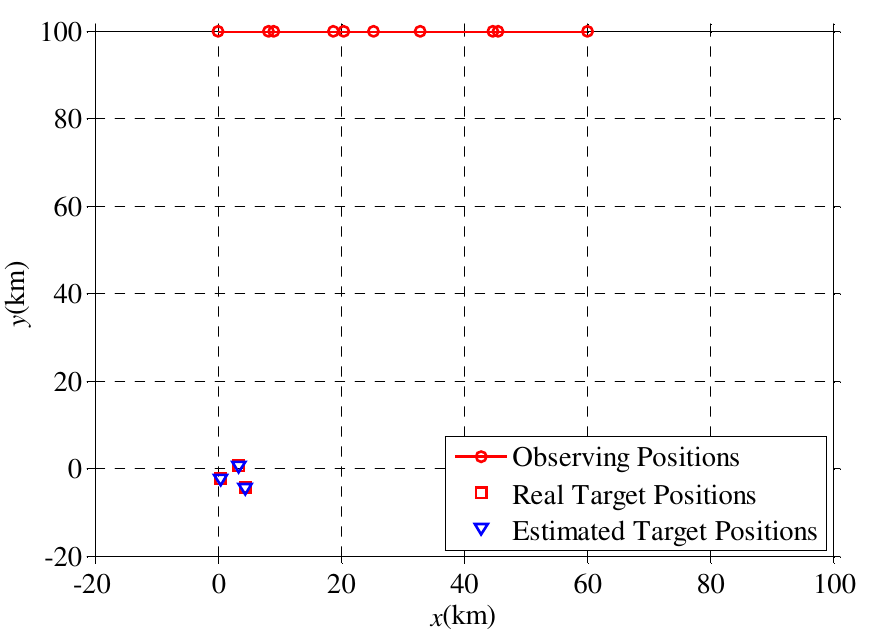}
\caption{ Localization geometry for the simulations.}
\label{4}
\end{figure}
can better approximate the ${\ell _0}$ penalty than ${\rm{log}}|\rho |$. It has
the potential to be more sparsity encouraging than the reweighted ${\ell _1}$-norm and the log-sum penalty function [19].

Fig. \ref{3} shows the shapes of different objective functions with respect to a
scalar variable $\rho $. Notice that $atan|\rho /\delta |$ is more like ${\ell
_0}$ penalty function than ${\rm{log}}|\rho |$. The smaller the parameter $\delta
 > 0$ becomes, the more similar $atan|\rho /\delta |$ is to ${\ell _0}$.

%第四章
\section{Numerical Simulations}
In this section, numerical experiments are performed to proof the promise of the
proposed localization estimation model using sparse representation of array
covariance matrix.

The localization geometry is depicted in Fig. 4. The circle markers represent
the positions of the reference antenna on a moving observing platform at
different time instances. The solid line represents the trajectory of the moving
array. The length of the solid line denotes the overall virtual aperture. The
virtual aperture is the maximum distance between the observing positions. The
antenna array in the simulations is a non-uniform linear array, and the antennas
are randomly distributed along \textit{x}
%图片7（Fig.5）
\begin{figure}
\centering
\includegraphics[width=0.7\textwidth]{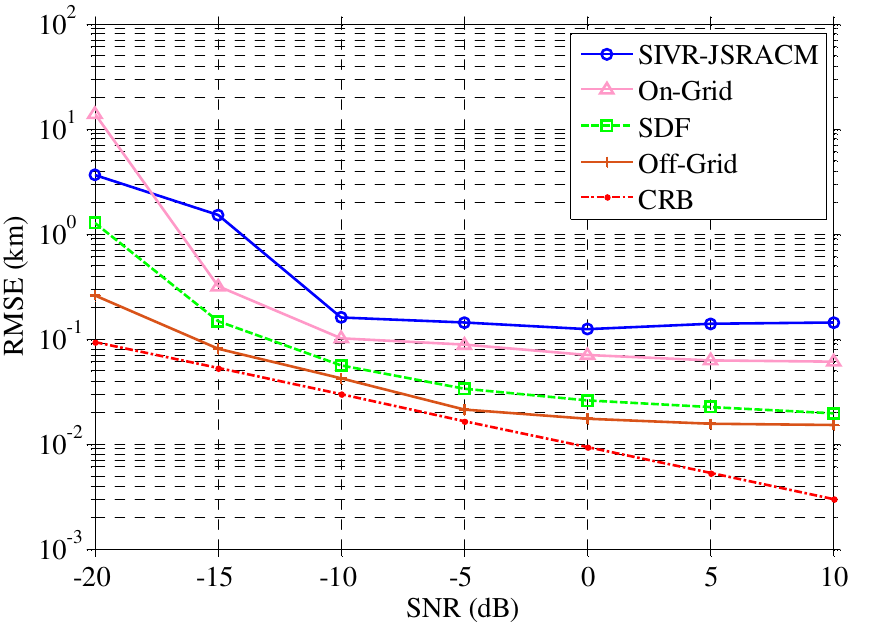}
\caption{RMSE of different methods.}
\label{5}
\end{figure}
%图片8（Fig.6）
coordinates in the range of $[0,D]$, relative to the reference antenna, where
\textit{D} is the real array aperture. The coordinates of the closely spaced
targets are [0.5140, -2.4755], [3.5106, 0.5102], [4.5115, -4.4876].

We carry out the simulation experiment to compare the performance with the
existing popular algorithms in Fig. 5. The experiment conditions are as follows.
The carrier frequencies of the signals are 6GHz. The real array aperture
\textit{D}=4m, the number of the observing points \textit{M} is 10, and the
number of the array elements \textit{L }is 11. The number of the snapshot at each
observing points is \textit{N}=1000. The performance of the localization
algorithm is measured by means of the root mean square error (RMSE), which is
defined as the average of \textit{Q} independent Monte Carlo trials, that is,

\begin{equation}
RMSE = \sqrt {\frac{1}{{P \cdot Q}}\sum\limits_{q = 1}^Q {\sum\limits_{p = 1}^P
{||{\bf{\hat p}}_p^{(q)} - {\bf{p}}_p^{(q)}|{|^2}} } }
\label{eq:37}
%(37)
\end{equation}
The line with circle markers represents the JSRACM algorithm of [6], and the
line with square markers is the subspace data fusion (SDF) algorithm of [5]. The
lines with markers `triangle' and `plus' represent the proposed on-grid and
off-grid algorithm respectively. The CRB is a line with dot markers, and
%图片9（Fig.7）
\begin{figure}
\centering
\includegraphics[width=0.7\textwidth]{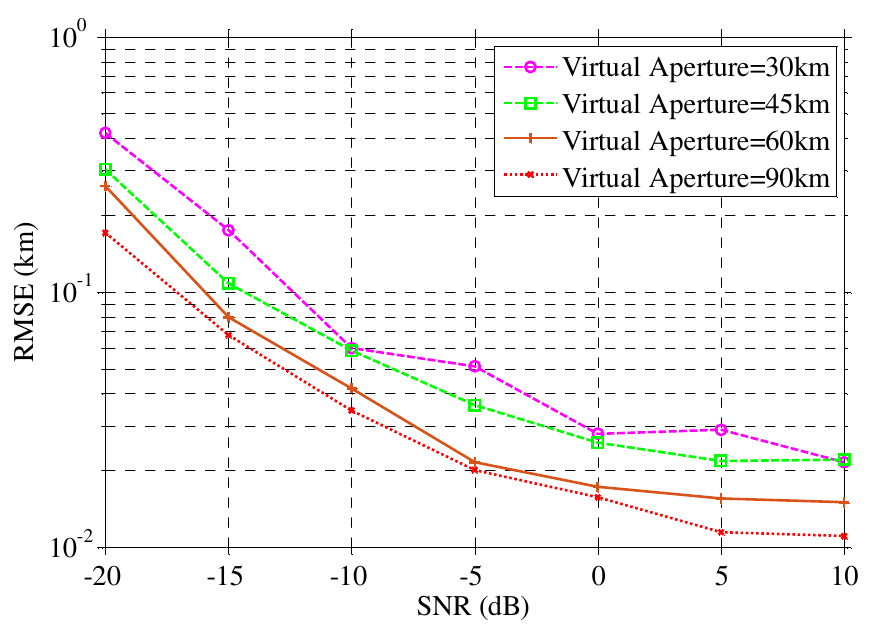}
\caption{RMSE for different virtual apertures.}
\label{7}
\end{figure}
%图片10（Fig.8）
\begin{figure}
\centering
\includegraphics[width=0.7\textwidth]{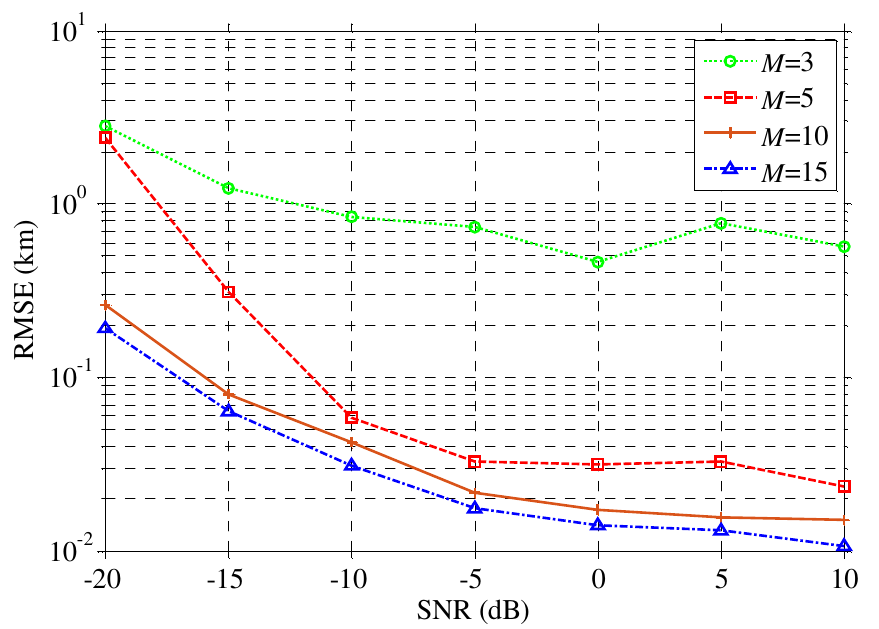}
\caption{RMSE for different numbers of observing points.}
\label{8}
\end{figure}
the calculation of CRB is given in
the Appendix B. The experimental condition is the same for each algorithm. The
grid resolution of the JSRACM and SDF algorithm is set to 25m. High grid
resolution results in very high computational complexity. As demonstrated in Fig.
5, the localization performance is greatly improved by using the proposed
off-grid algorithm.

The RMSEs of the localization for different virtual apertures are shown in Fig.\ref{7}. The experiment's condition is the same as the first one, except that the
virtual aperture varies from 30km to 90km. We observe that the localization
precision is approximately proportional to the virtual aperture. However, when
the virtual aperture is too small, the off-grid algorithm is easy to fall into a
local minimum value to fail the localization.

The number of the observing points has a considerably
%图片11（Fig.9）
\begin{figure}
\centering
\includegraphics[width=0.7\textwidth]{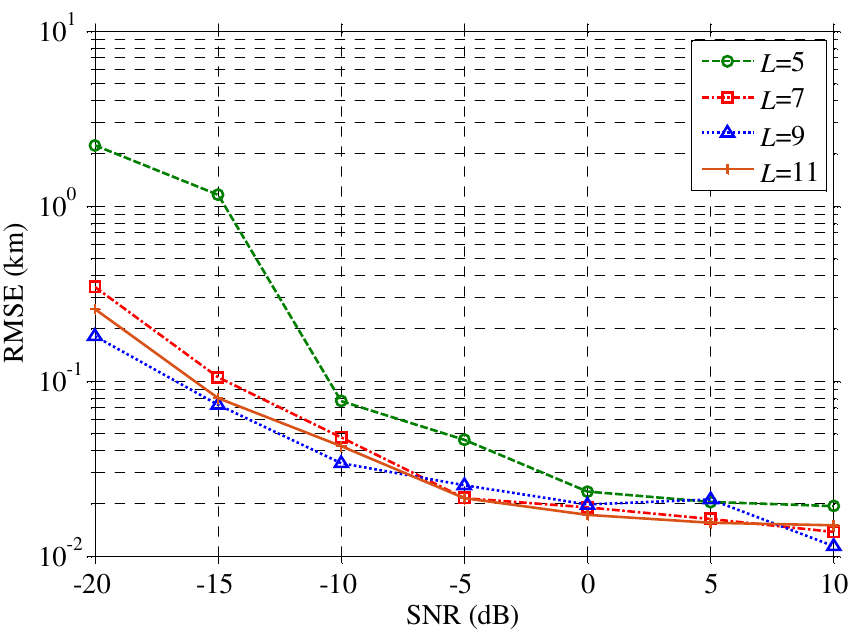}
\caption{RMSE for different numbers of antennas.}
\label{9}
\end{figure}
%图片12（Fig.10）
\begin{figure}
\centering
\includegraphics[width=0.7\textwidth]{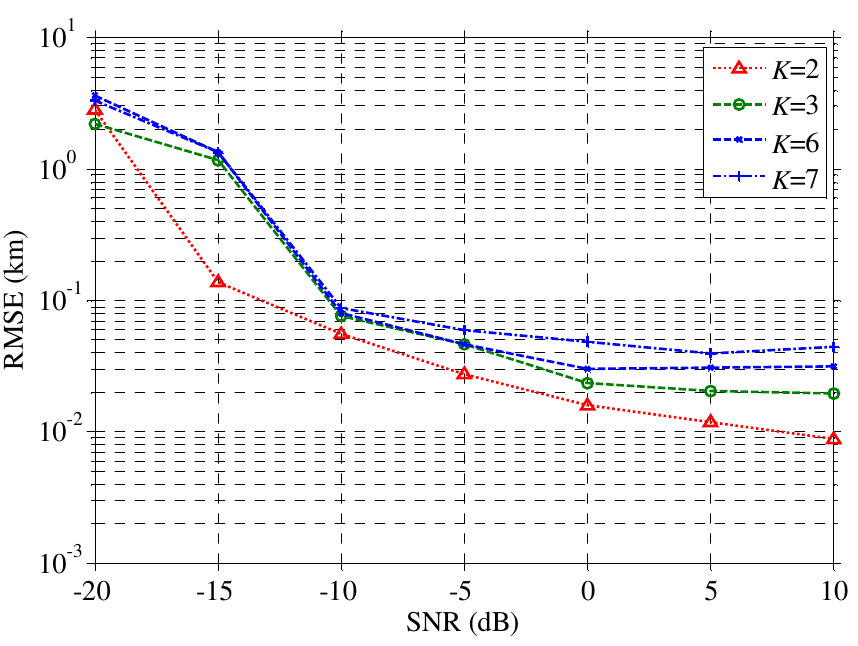}
\caption{RMSE for different numbers of targets.}
\label{10}
\end{figure}
impact on the localization precision
as in Fig.\ref{8}. A comparatively small \textit{M} is needed in the proposed
algorithm. It is only needed to be large enough to meet the demand of the
observing condition. That is a good property for most localization problems
because numerous observing sites are often unavailable. For instance, in the
passive surveillance, the intercepted signals have limited measurement samples
due to the noncooperative receiving. This property significantly outweighs the
phase difference rate algorithm [3], because the latter needs a large amount of
consecutive observing points to track the rate. As in Fig. 8, the proposed
algorithm in this paper only needs three observing points to get an acceptable
performance.

The number of the antennas also has a significant influence on the localization
precision. As in Fig.\ref{9}, the experiment's condition is the same as the first one,
except that the number of the antennas varies from five to eleven. More antennas
result in higher localization performance. However, the experiment is also
inspiring in the implementation, as the proposed algorithm does not need a great
number of array elements.

The proposed algorithm can also deal with the situation that the number of
targets is greater than that of the antennas. As shown in Fig.\ref{10}, the number of
antennas is five. The SDF algorithm [5] can only estimate the positions of no
more than four targets. We notice that with the number of targets increasing, the
precision will relatively decrease. However, the proposed off-grid algorithm
still reaches a fairly high performance even when the number of targets is
greater than that of antennas.

%第五章
\section{Conclusions}
We propose a novel one-step localization system model using the sparse representation of the array covariance matrix for both on-grid and off-grid scenarios. The simulation verifies the idea that the atan-sum is more sparsity encouraging. The experiments also show that the localization performance is more dependent on the virtual aperture, and less dependent on the number of the array elements, and the number of the observing sites. The ability of the multiple-target localization is also verified.

%% The Appendices part is started with the command \appendix;
%% appendix sections are then done as normal sections
\appendix
%\section{APPENDIX}
%\chapter{附录}

\section{Proof of Theorem 1}
In order to show our proof clearly, we use $\bf{\Psi }={\bf{\Psi }}(\mathcal{S})$ for a given $\mathcal{S}$. By reference to the mathematical skill in \cite{x5}, we will give a rigorous proof of the equivalence between the alternative method and the original sparse problem.  

In sparse recovery problem, the core problem is to solve the following $\ell_0$-minimization problem,
\begin{eqnarray}\label{l0}
\min \limits_{{\bf{x}}\in \mathbb{R}^n}\|{\bf{x}}\|_0 \ \ s.t. \ {\bf{\Psi}}{\bf{x}}={\bf{b}}
\end{eqnarray}
Because $\ell_0$-minimization is a NP-HARD problem, so we consider the following alternative function $g_\delta(\cdot)$ to replace $\ell_0$-norm, and we can derive the corresponding $\ell_{g_\delta}$-minimization,
\begin{eqnarray}\label{fp-model}
\min\limits_{{\bf{x}}\in \mathbb{R}^n} \|{\bf{x}}\|_{g_\delta} s.t.\ {\bf{\Psi}}{\bf{x}}={\bf{b}}
\end{eqnarray}
where $\|{\bf{x}}\|_{g_\delta}=\sum \limits_{i=1}^n g_\delta(x_i)$.

In practical application, the measurement vector ${\bf{b}}$ usually contain noise, so model \ref{l0} and model \ref{fp-model} can be changed into the following regularization models
\begin{eqnarray}\label{l0-r}
\min \|{\bf{\Psi}}{\bf{x}}-{\bf{b}}\|_2^2+\lambda\|{\bf{x}}\|_0
\end{eqnarray}
and
\begin{eqnarray}\label{fp-r}
\min \|{\bf{\Psi}}{\bf{x}}-{\bf{b}}\|_2^2+\lambda\|{\bf{x}}\|_{g_\delta}
\end{eqnarray}
Therefore, the main contribution of this paper is to prove the equivalence relationship between model(\ref{l0-r}) and model(\ref{fp-r}).
\begin{theorem}\label{theorem3}\cite{x2}
There exists a constant $\lambda_0>0$, such that the minimization problem \ref{l0} and \ref{l0-r} have the same solution for all $0<\lambda<\lambda_0$.
\end{theorem}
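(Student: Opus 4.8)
The plan is to exploit the integrality of $\|\cdot\|_0$ so that the comparison between (\ref{l0}) and (\ref{l0-r}) reduces to finitely many residual levels. First I would introduce the value function $\phi(k):=\min_{\|\mathbf{x}\|_0\le k}\|\mathbf{\Psi}\mathbf{x}-\mathbf{b}\|_2^2$ for $k=0,1,\ldots,n$. Each $\phi(k)$ is the smallest of finitely many least-squares residuals, one for every support of size at most $k$, so it is attained, and $\phi$ is nonincreasing. Assuming the constraint $\mathbf{\Psi}\mathbf{x}=\mathbf{b}$ is consistent, let $s^{*}:=\min\{\|\mathbf{x}\|_0:\mathbf{\Psi}\mathbf{x}=\mathbf{b}\}$ be the optimal value of (\ref{l0}) and let $X^{*}$ denote its solution set; then $\phi(k)=0$ for $k\ge s^{*}$ and, by minimality of $s^{*}$, $\phi(k)>0$ for every $k<s^{*}$.

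The quantitative heart of the argument is to convert this strict positivity into a single threshold. Because only finitely many integers lie below $s^{*}$, the number $\gamma:=\min_{0\le k<s^{*}}\phi(k)$ is a minimum of finitely many strictly positive reals and is therefore strictly positive. I would then set $\lambda_0:=\gamma/s^{*}$. Fixing any $\lambda$ with $0<\lambda<\lambda_0$, I compare the regularized objective $J(\mathbf{x}):=\|\mathbf{\Psi}\mathbf{x}-\mathbf{b}\|_2^2+\lambda\|\mathbf{x}\|_0$ against the value $\lambda s^{*}$ that it takes on $X^{*}$. Writing $k=\|\mathbf{x}\|_0$, a three-case split finishes the proof: for $k<s^{*}$ we have $J(\mathbf{x})\ge\phi(k)\ge\gamma>\lambda s^{*}$; for $k=s^{*}$ we have $J(\mathbf{x})=\|\mathbf{\Psi}\mathbf{x}-\mathbf{b}\|_2^2+\lambda s^{*}\ge\lambda s^{*}$ with equality iff $\mathbf{\Psi}\mathbf{x}=\mathbf{b}$, i.e. $\mathbf{x}\in X^{*}$; and for $k>s^{*}$ we have $J(\mathbf{x})\ge\lambda k>\lambda s^{*}$. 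Thus $J\ge\lambda s^{*}$ everywhere with equality exactly on $X^{*}$, so the minimizers of (\ref{l0-r}) are precisely the solutions of (\ref{l0}).

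I expect the main obstacle to be the uniform-gap step rather than the case analysis: one must guarantee that a single $\lambda_0$ serves all admissible $\lambda$, and this hinges entirely on the discreteness of $\|\cdot\|_0$, which forces the finitely many suboptimal sparsity levels to incur residuals bounded away from zero. A secondary point to state carefully is feasibility of (\ref{l0}), since without $\mathbf{b}\in\mathrm{range}(\mathbf{\Psi})$ the constrained problem is empty and $s^{*}$ undefined; in the localization setting this is ensured by the noise-free model, and the regularized formulation then also attains its infimum $\lambda s^{*}$ on the nonempty set $X^{*}$, so no separate compactness argument for existence of a minimizer of (\ref{l0-r}) is needed.
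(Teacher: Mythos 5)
Your argument is essentially correct, but note that the paper itself does not prove this statement: Theorem~\ref{theorem3} is imported verbatim from the cited thesis \cite{x2} and used as a black box, so there is no in-paper proof to compare against. Your route is a clean, self-contained elementary argument: you exploit the integrality of $\|\cdot\|_0$ to reduce everything to the finitely many value levels $\phi(k)$, each attained because it is a minimum of finitely many least-squares residuals over closed subspaces, extract a uniform gap $\gamma=\min_{k<s^*}\phi(k)>0$ (which, since $\phi$ is nonincreasing, is just $\phi(s^*-1)$), and run the three-case comparison against the benchmark value $\lambda s^*$. The case analysis is sound, and you correctly observe that existence of a minimizer of (\ref{l0-r}) falls out for free since the infimum $\lambda s^*$ is attained on the nonempty set $X^*$. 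Two small points should be made explicit to make this airtight: (i) the degenerate case $s^*=0$ (i.e. $\mathbf{b}=0$) must be treated separately, since your $\lambda_0=\gamma/s^*$ divides by zero and $\gamma$ is then a minimum over an empty index set --- in that case any $\lambda_0>0$ works because $J(\mathbf{x})=0$ iff $\mathbf{x}=0=X^*$; and (ii) feasibility $\mathbf{b}\in\mathrm{range}(\mathbf{\Psi})$ is a genuine hypothesis that the paper's statement leaves implicit, and in the actual application $\hat{\mathbf{r}}$ is a noisy estimate, so consistency is not automatic there; you flag this correctly. With those caveats stated, your proof stands on its own and arguably strengthens the paper, which currently rests this equivalence entirely on an external citation.
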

Next, we will prove the equivalence between $l_{g_\delta}$-minimization and $l_0$-minimization.
\begin{lemma}\label{lemma2-1}
If ${\bf{x}}^*$ is solution of $l_{g_\delta}$-minimization, then the sub-matrix ${\bf{\Psi}}_S$ is column full rank, where $S=supp({\bf{x}}^*)$
\end{lemma}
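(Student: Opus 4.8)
The plan is to argue by contradiction, exploiting the strict concavity of $g_\delta$ on each open half-line recorded just after \eqref{eq:19}. Suppose ${\bf{\Psi}}_S$ were \emph{not} column full rank. Then the columns $\{{\boldsymbol{\psi}}_i : i\in S\}$ are linearly dependent, so there exists a nonzero vector ${\bf{h}}$ with $supp({\bf{h}})\subseteq S$ and ${\bf{\Psi}}{\bf{h}}={\bf{0}}$. Consider the one-parameter family ${\bf{x}}(t):={\bf{x}}^*+t{\bf{h}}$. Since ${\bf{\Psi}}{\bf{h}}={\bf{0}}$, every ${\bf{x}}(t)$ satisfies ${\bf{\Psi}}{\bf{x}}(t)={\bf{b}}$ and is thus feasible for the $l_{g_\delta}$-minimization; moreover $supp({\bf{x}}(t))\subseteq S$, so no coordinate outside $S$ is ever activated.

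First I would restrict the objective to this line. Because $g_\delta(0)=atan\,0=0$ and $h_i=0$ for $i\notin S$, the objective collapses to $\phi(t):=\|{\bf{x}}(t)\|_{g_\delta}=\sum_{i\in S} g_\delta(x_i^*+t h_i)$. The crucial observation is that every coordinate $x_i^*$ with $i\in S$ is nonzero, by the very definition of the support; hence there is an open interval $(a,b)\ni 0$, with $a,b$ finite because ${\bf{h}}\neq{\bf{0}}$, on which none of the affine functions $t\mapsto x_i^*+t h_i$ changes sign. On each such sign-preserving branch $g_\delta(u)=atan(|u|/\delta)$ is smooth with $g_\delta''(u)=-2\delta\,|u|/(\delta^2+u^2)^2<0$, so it is strictly concave; composing with the affine map $t\mapsto x_i^*+t h_i$ preserves strict concavity for every $i\in S$ with $h_i\neq 0$, while the remaining terms are constant. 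Since ${\bf{h}}\neq{\bf{0}}$, at least one index contributes a strictly concave term, and therefore $\phi$ is strictly concave on $(a,b)$.

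The conclusion then follows from the elementary fact that a strictly concave function admits no interior local minimizer: as $0$ lies in the open interval $(a,b)$, there exist values of $t$ arbitrarily close to $0$ with $\phi(t)<\phi(0)=\|{\bf{x}}^*\|_{g_\delta}$. Each corresponding ${\bf{x}}(t)$ is feasible, so this strictly lowers the $l_{g_\delta}$ objective below its value at ${\bf{x}}^*$, contradicting the assumed optimality of ${\bf{x}}^*$. Hence the columns of ${\bf{\Psi}}_S$ must be linearly independent, i.e. ${\bf{\Psi}}_S$ is column full rank.

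I expect the only delicate point to be the bookkeeping of the sign-preserving interval. The hard part is to invoke $x_i^*\neq 0$ for $i\in S$ carefully, guaranteeing that $0$ is \emph{interior} to the region where $g_\delta$ is differentiable and strictly concave; this is essential because $atan(|\cdot|/\delta)$ has a nonsmooth V-shaped kink at the origin where concavity fails, and a naive argument that passed through the kink would break down. One must also confirm that $a$ and $b$ are finite so that $(a,b)$ is a genuine interval with $0$ strictly inside. Everything else is a direct consequence of the strict concavity of $atan(|\cdot|/\delta)$ on each orthant.
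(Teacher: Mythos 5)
Your proof is correct and follows essentially the same route as the paper: assume a nonzero kernel vector ${\bf{h}}$ supported on $S$, restrict to the feasible line ${\bf{x}}^*+t{\bf{h}}$, and use strict concavity of $atan(|\cdot|/\delta)$ on the sign-preserving neighbourhood of ${\bf{x}}^*$ to contradict optimality (the paper phrases this as the midpoint inequality $\|{\bf{x}}^*\|_{g_\delta}>\tfrac12\|{\bf{x}}^*+\alpha{\bf{h}}\|_{g_\delta}+\tfrac12\|{\bf{x}}^*-\alpha{\bf{h}}\|_{g_\delta}$, which is the same "no interior minimizer of a strictly concave function" fact). Your write-up is in fact a bit more careful about the sign-preserving interval than the paper's, whose choice $a=\max_{h_i\neq 0}|x_i^*|/|h_i|$ should be a minimum.
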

\begin{proof}
If there exists a vector ${\bf{h}} \in Ker({\bf{\Psi}})$ such that $supp({\bf{h}})\subseteq S$. Without of generality, we note that ${\bf{x}}^*=[x^*_1,x^*_2,...,x^*_n]^T$. Let
\begin{equation}
a=\max \limits_{h_i\neq 0} \frac{|x^*_i|}{|h_i|}
\end{equation}
therefore, it is easy to proof that
\begin{equation}
\notag sgn(x^*_i+\alpha h_i)=sgn(x^*_i-\alpha h_i)=sgn(x^*_i)
\end{equation}
for $\alpha \in [0,a]$. Since $g_\delta({\bf{x}})$ is a concave function when ${\bf{x}}\geq 0$ or ${\bf{x}}\leq 0$, it is easy to get that
\begin{eqnarray}
\notag g_\delta(|x^*_i|)&=&g_\delta(\frac{1}{2}|x^*_i+\alpha h_i|+\frac{1}{2}|x^*_i-\alpha h_i|)\\
\notag &> & \frac{1}{2}g_\delta(|x^*_i+\alpha h_i|)+\frac{1}{2}g_\delta(|x^*_i-\alpha h_i|)
\end{eqnarray}
Therefore,
\begin{equation}
\notag \|{\bf{x}}^*\|_{g_\delta}> \frac{1}{2}\|{\bf{x}}^*+\alpha {\bf{h}}\|_{g_\delta}+\frac{1}{2}\|{\bf{x}}^*-\alpha {\bf{h}}\|_{g_\delta}
\end{equation}
which contradicts the assumption condition.
\end{proof}
\begin{theorem}\label{theorem4}
There exists a constant $\delta^*({\bf{\Psi}},{\bf{b}})$ based on ${\bf{\Psi}}$ and ${\bf{b}}$ such that the solution of $l_{g_\delta}$-minimization also solves $l_0$-minimization whenever $0<\delta<\delta^*({\bf{\Psi}},{\bf{b}})$.
\end{theorem}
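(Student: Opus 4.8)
The plan is to collapse the continuous minimization over the feasible affine set $\{\mathbf{x} : \mathbf{\Psi}\mathbf{x} = \mathbf{b}\}$ onto a comparison among finitely many candidate vectors, and then to exploit the elementary asymptotics of $\mathrm{atan}$ to separate those candidates by their sparsity once $\delta$ is small. First I would invoke Lemma \ref{lemma2-1}: any solution $\mathbf{x}^*$ of the $l_{g_\delta}$-problem is supported on a set $S = \mathrm{supp}(\mathbf{x}^*)$ for which $\mathbf{\Psi}_S$ is column full rank. Since full column rank forces $\mathbf{\Psi}_S\mathbf{x}_S = \mathbf{b}$ to have a unique solution, such an $\mathbf{x}^*$ is completely determined by its support. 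As there are only finitely many subsets $S \subseteq \{1,\dots,n\}$ with $\mathbf{\Psi}_S$ full column rank and $\mathbf{b} \in \mathrm{range}(\mathbf{\Psi}_S)$, there is a finite family of candidate solutions $\mathbf{x}^{(1)},\dots,\mathbf{x}^{(T)}$ with supports $S_t$ and sizes $k_t = \|\mathbf{x}^{(t)}\|_0$, and every $l_{g_\delta}$-minimizer is one of them. The same full-rank reduction shows that every $l_0$-minimizer also lies in this family, so the optimal $l_0$ value is $k_0 := \min_t k_t$.

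Next I would quantify the penalty on each candidate using the identity $\mathrm{atan}(u) = \tfrac{\pi}{2} - \mathrm{atan}(1/u)$ together with $\mathrm{atan}(v) \le v$, which gives, for each nonzero entry, $\tfrac{\pi}{2} - \delta/|x_i^{(t)}| \le \mathrm{atan}(|x_i^{(t)}|/\delta) < \tfrac{\pi}{2}$. Summing over $S_t$ yields the two-sided bound
\[
\tfrac{\pi}{2}\,k_t - \delta\, C_t \;\le\; \|\mathbf{x}^{(t)}\|_{g_\delta} \;<\; \tfrac{\pi}{2}\,k_t, \qquad C_t := \sum_{i \in S_t} \frac{1}{|x_i^{(t)}|}.
\]
Thus every minimal-sparsity candidate satisfies $\|\mathbf{x}^{(t)}\|_{g_\delta} < \tfrac{\pi}{2}k_0$, whereas any candidate with $k_t > k_0$ satisfies $\|\mathbf{x}^{(t)}\|_{g_\delta} \ge \tfrac{\pi}{2}k_t - \delta C_t$. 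These two ranges separate precisely when $\delta$ is small, so I would set
\[
\delta^*(\mathbf{\Psi},\mathbf{b}) := \min_{t:\, k_t > k_0} \frac{(\pi/2)\,(k_t - k_0)}{C_t} > 0,
\]
which is strictly positive because the minimum is over a finite set and each $C_t$ is finite.

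Finally, for any $0 < \delta < \delta^*$ and any candidate with $k_t > k_0$ one checks $\tfrac{\pi}{2}k_t - \delta C_t > \tfrac{\pi}{2}k_0 > \|\mathbf{x}^{(0)}\|_{g_\delta}$ for a minimal-sparsity candidate $\mathbf{x}^{(0)}$; hence no candidate of sparsity exceeding $k_0$ can be a minimizer, and every $l_{g_\delta}$-solution must have $\|\mathbf{x}^*\|_0 = k_0$, i.e.\ it also solves the $l_0$-problem. I expect the main obstacle to be the reduction step rather than the estimates: everything hinges on Lemma \ref{lemma2-1} to replace the continuum of feasible points by a finite candidate set, and implicitly on the existence of a minimizer, which needs care since $\|\cdot\|_{g_\delta}$ is bounded and therefore not coercive on the unbounded feasible set. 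Once that reduction is in place, selecting $\delta^*$ uniformly over the finitely many suboptimal supports through the $\mathrm{atan}$ bound is routine.
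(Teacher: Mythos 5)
Your proposal is correct and follows essentially the same route as the paper: Lemma \ref{lemma2-1} reduces the competition to the finite set of feasible points whose support submatrices have full column rank, the $\mathrm{atan}$ asymptotics then separate these candidates by sparsity once $\delta$ is small, and $\delta^*(\mathbf{\Psi},\mathbf{b})$ is obtained as a minimum over finitely many comparisons. The only substantive difference is that you make the threshold explicit via the two-sided bound $\tfrac{\pi}{2}-\delta/|x_i|\le \mathrm{atan}(|x_i|/\delta)<\tfrac{\pi}{2}$, whereas the paper merely asserts the existence of the pairwise thresholds $\delta(\mathbf{x},\mathbf{y})$; your remark about existence of a minimizer is a real subtlety, but one the paper leaves implicit as well.
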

\begin{proof}
For fixed ${\bf{\Psi}}$ and ${\bf{b}}$, we can define the following sets,
\begin{eqnarray}
V=\left\{{\bf{x}}| {\bf{\Psi}}{\bf{x}}={\bf{b}},{\bf{\Psi}}_S\text{ is column full rank, where }S=supp({\bf{x}})\right\}
\end{eqnarray}
and
\begin{eqnarray}
W=\left\{{\bf{x}}\in V|\forall {\bf{y}}\in V, \|{\bf{x}}\|_0\leq \|{\bf{y}}\|_0 \right\}
\end{eqnarray}
It is obvious that $W\subseteq V$ and easy to get the following inequality,
\begin{eqnarray}
\|{\bf{x}}\|_0\leq \|{\bf{y}}\|_0+1
\end{eqnarray}
for any ${\bf{x}}\in W$ and ${\bf{y}}\in \overline{W}$, and there exists a constant $\delta({\bf{x}},{\bf{y}})$ such that
\begin{eqnarray}
\|{\bf{x}}\|_{g_\delta}<\|{\bf{y}}\|_{g_\delta}
\end{eqnarray}
whenever $0<\delta<\delta({\bf{x}},{\bf{y}})$. Since the set $V$ is finite, for any ${\bf{x}}\in W$ and ${\bf{y}}\in \overline{W}$, we can conclude that
\begin{eqnarray}\label{0417_1008}
\|{\bf{x}}\|_{g_\delta}<\|{\bf{y}}\|_{g_\delta}
\end{eqnarray}
whenever $0<\delta<\delta^*({\bf{\Psi}},{\bf{b}})$, where
\begin{eqnarray}
\delta^*({\bf{\Psi}},{\bf{b}})=\min \limits_{{\bf{x}}\in W,{\bf{y}}\in \overline{W}} \delta({\bf{x}},{\bf{y}}).
\end{eqnarray}
By the definition of $0$-norm and Lemma (\ref{lemma2-1}), it is obvious that the solutions of $l_0$-minimization belongs to the set $W$ and we can get conclusion of this theorem by (\ref{0417_1008}).
\end{proof}
\begin{lemma}\label{lemma3-1}
If ${\bf{x}}^*$ is the solution of $l_{g_\delta}^{\lambda}$-minimization, for any ${\bf{h}}$ with $supp({\bf{h}})\subseteq supp({\bf{x}}^*)$, we have that
\begin{eqnarray}
2<{\bf{b}}-{\bf{\Psi}}{\bf{x}}^*,{\bf{\Psi}}{\bf{h}}>=\lambda\sum \limits_{i\in supp({\bf{x}}^*)}\frac{\delta h_i sgn(x^*_i)}{\delta^2+(x_i^*)^2}
\end{eqnarray}
and
\begin{eqnarray}
2({\bf{\Psi}}^T({\bf{b}}-{\bf{\Psi}}{\bf{x}}^*))_i=\frac{\lambda \delta sgn(x_i^*)}{\delta^2+(x_i^*)^2}
\end{eqnarray}
for any $i\in supp({\bf{x}}^*)$.
\end{lemma}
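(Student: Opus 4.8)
The plan is to read both claimed identities as the first-order stationarity conditions for the regularized objective
\[
F({\bf{x}}) := \|{\bf{\Psi}}{\bf{x}}-{\bf{b}}\|_2^2 + \lambda\|{\bf{x}}\|_{g_\delta} = \|{\bf{\Psi}}{\bf{x}}-{\bf{b}}\|_2^2 + \lambda\sum_{i=1}^n \arctan|x_i/\delta|,
\]
evaluated at the minimizer ${\bf{x}}^*$ but probed only along directions supported on $S := supp({\bf{x}}^*)$. The observation that makes this legitimate is that for every $i\in S$ we have $x_i^*\neq 0$, so the scalar map $x_i\mapsto \arctan|x_i/\delta|$ is smooth in a neighborhood of $x_i^*$, with derivative $\delta\, sgn(x_i^*)/(\delta^2+(x_i^*)^2)$; the only point where $g_\delta$ fails to be differentiable is the origin, which restriction to $S$ avoids entirely.

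First I would establish the componentwise identity. Fix $i\in S$ and consider $t\mapsto F({\bf{x}}^*+t{\bf{e}}_i)$, where ${\bf{e}}_i$ is the $i$th coordinate vector. For $|t|$ small this perturbation keeps the $i$th coordinate nonzero and of fixed sign, so the map stays in the smooth region; since ${\bf{x}}^*$ is a (local) minimizer, its derivative at $t=0$ must vanish. Differentiating the data-fidelity term contributes $2({\bf{\Psi}}^T({\bf{\Psi}}{\bf{x}}^*-{\bf{b}}))_i = -2({\bf{\Psi}}^T({\bf{b}}-{\bf{\Psi}}{\bf{x}}^*))_i$, and differentiating the penalty contributes $\lambda\,\delta\, sgn(x_i^*)/(\delta^2+(x_i^*)^2)$. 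Setting the sum to zero gives $2({\bf{\Psi}}^T({\bf{b}}-{\bf{\Psi}}{\bf{x}}^*))_i = \lambda\delta\, sgn(x_i^*)/(\delta^2+(x_i^*)^2)$, which is precisely the second claimed equation.

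The first identity then follows by pure linearity. Since $supp({\bf{h}})\subseteq S$, I would expand
\[
2\langle {\bf{b}}-{\bf{\Psi}}{\bf{x}}^*,{\bf{\Psi}}{\bf{h}}\rangle = 2\langle {\bf{\Psi}}^T({\bf{b}}-{\bf{\Psi}}{\bf{x}}^*),{\bf{h}}\rangle = \sum_{i\in S} 2({\bf{\Psi}}^T({\bf{b}}-{\bf{\Psi}}{\bf{x}}^*))_i\, h_i,
\]
and substitute the componentwise identity just proved into each summand, yielding $\lambda\sum_{i\in S}\delta\, h_i\, sgn(x_i^*)/(\delta^2+(x_i^*)^2)$. (Conversely the componentwise version is recovered by taking ${\bf{h}}={\bf{e}}_i$, so the two statements are really one.) The only step that requires genuine care, and hence the main obstacle, is justifying that stationarity may be invoked despite $g_\delta$ being non-differentiable at $0$: one must confirm that the admissible perturbations keep every active coordinate's sign fixed and bounded away from zero. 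This is exactly what $supp({\bf{h}})\subseteq supp({\bf{x}}^*)$ guarantees, and it is the same sign-preservation mechanism already used in the proof of Lemma \ref{lemma2-1}.
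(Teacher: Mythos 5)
Your proposal is correct and follows essentially the same route as the paper: both derive the identities as first-order optimality conditions for the regularized objective along perturbations supported on $supp({\bf{x}}^*)$, where the penalty is differentiable because every active coordinate is nonzero. The only cosmetic difference is that the paper treats a general direction ${\bf{h}}$ first via one-sided limits $t\to 0^{\pm}$ of the difference quotient and then specializes to ${\bf{h}}={\bf{e}}_i$, whereas you prove the componentwise identity first and recover the general form by linearity.
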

\begin{proof}
For any $t\in \mathbb{R}$ and ${\bf{h}}\in \mathbb{R} ^n$ with $supp({\bf{h}})\subseteq supp({\bf{x}}^*)$, it is obvious that
\begin{eqnarray}
\|{\bf{\Psi}}{\bf{x}}^*-{\bf{b}}\|_2^2+\lambda\|{\bf{x}}^*\|_{g_\delta}\leq \|{\bf{\Psi}}({\bf{x}}^*+t{\bf{h}})-{\bf{b}}\|_2^2+\lambda\|{\bf{x}}^*+t{\bf{h}}\|_{g_\delta}
\end{eqnarray}
Since $supp({\bf{h}})\subseteq supp({\bf{x}}^*)$, we can get that
\begin{eqnarray}
\|{\bf{x}}^*+t{\bf{h}}\|_{g_\delta}-\|{\bf{x}}^*\|_{g_\delta}=\sum \limits_{i\in supp({\bf{x}}^*)}\left(atan\frac{|x^*_i+t_i|}{\delta}-atan\frac{|x^*_i|}{\delta}\right)
\end{eqnarray}
Therefore, we can get that
\begin{eqnarray}
t^2\|{\bf{\Psi}}{\bf{h}}\|_2^2+2t<{\bf{\Psi}}{\bf{x}}^*-{\bf{b}},{\bf{\Psi}}{\bf{h}}>+\lambda\sum \limits_{i\in supp({\bf{x}}^*)}\left(atan\frac{|x^*_i+th_i|}{\delta}-atan\frac{|x^*_i|}{\delta}\right)\geq 0
\end{eqnarray}
Let $t>0$ and $t\rightarrow 0^+$, we can get that
\begin{eqnarray}
2<{\bf{\Psi}}{\bf{x}}^*-{\bf{b}},{\bf{\Psi}}{\bf{h}}>+\lambda \lim \limits_{t\rightarrow 0^+} \sum \limits_{i\in supp({\bf{x}}^*)}\frac{atan\frac{|x^*_i+th_i|}{\delta}-atan\frac{|x^*_i|}{\delta}}{t}\geq 0,
\end{eqnarray}
i.e.,
\begin{eqnarray}
2<{\bf{\Psi}}{\bf{x}}^*-{\bf{b}},{\bf{\Psi}}{\bf{h}}>+\lambda \sum \limits_{i\in supp({\bf{x}}^*)}\frac{\delta \cdot sgn(x_i^*)}{\delta^2+(x_i^*)^2}\geq 0
\end{eqnarray}
Let $t<0$ and $t\rightarrow 0^-$, repeat the above action, we can get that
\begin{eqnarray}
2<{\bf{\Psi}}{\bf{x}}^*-{\bf{b}},{\bf{\Psi}}{\bf{h}}>+\lambda \sum \limits_{i\in supp({\bf{x}}^*)}\frac{\delta \cdot sgn(x_i^*)}{\delta^2+(x_i^*)^2}\leq 0
\end{eqnarray}
Therefore, we can get that
\begin{eqnarray}
2<{\bf{b}}-{\bf{\Psi}}{\bf{x}}^*,{\bf{\Psi}}{\bf{h}}>=\lambda\sum \limits_{i\in supp({\bf{x}}^*)}\frac{\delta \cdot h_isgn(x^*_i)}{\delta^2+(x_i^*)^2}
\end{eqnarray}
and we can get the second conclusion of this lemma if ${\bf{h}}={\bf{e}}_i$.
\end{proof}
\begin{lemma}\label{lemma3-2}
If ${\bf{x}}^*$ is the solution of $l_{g_\delta}^{\lambda}$-minimization and $\lambda > \frac{2\|{\bf{b}}\|_2^2}{\pi}$, then we have that
\begin{eqnarray}
\|{\bf{x}}^*\|_{\infty}\leq \delta\cdot tan\left( \frac{\|{\bf{b}}\|_2^2}{\lambda}\right).
\end{eqnarray}
\end{lemma}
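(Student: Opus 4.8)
The plan is to exploit the minimality of ${\bf{x}}^*$ directly, comparing the objective value at ${\bf{x}}^*$ against the trivial feasible point ${\bf{x}}={\bf{0}}$. Since ${\bf{x}}^*$ minimizes $\|{\bf{\Psi}}{\bf{x}}-{\bf{b}}\|_2^2+\lambda\|{\bf{x}}\|_{g_\delta}$ and the zero vector yields objective value $\|{\bf{b}}\|_2^2$ (because $\|{\bf{0}}\|_{g_\delta}=0$), one immediately obtains
\[
\|{\bf{\Psi}}{\bf{x}}^*-{\bf{b}}\|_2^2+\lambda\|{\bf{x}}^*\|_{g_\delta}\leq \|{\bf{b}}\|_2^2 .
\]
Dropping the nonnegative data-fidelity term on the left gives $\lambda\|{\bf{x}}^*\|_{g_\delta}\leq \|{\bf{b}}\|_2^2$, that is, $\|{\bf{x}}^*\|_{g_\delta}\leq \|{\bf{b}}\|_2^2/\lambda$.

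Next I would pass from this bound on the full atan-sum to a bound on the largest coordinate. Because every summand $\mathrm{atan}(|x_i^*|/\delta)$ is nonnegative, each one is dominated by the whole sum; choosing the index $i^*$ at which $|x_{i^*}^*|=\|{\bf{x}}^*\|_{\infty}$ yields
\[
\mathrm{atan}\!\left(\frac{\|{\bf{x}}^*\|_{\infty}}{\delta}\right)\leq \sum_{i=1}^n \mathrm{atan}\!\left(\frac{|x_i^*|}{\delta}\right)=\|{\bf{x}}^*\|_{g_\delta}\leq \frac{\|{\bf{b}}\|_2^2}{\lambda}.
\]

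The final step is to invert the arctangent, and this is where the hypothesis $\lambda>2\|{\bf{b}}\|_2^2/\pi$ enters: it is exactly the condition guaranteeing $\|{\bf{b}}\|_2^2/\lambda<\pi/2$, so that the right-hand side above lies strictly inside the principal range $[0,\pi/2)$ on which $\tan$ is defined and strictly increasing. Applying $\tan$, which preserves the inequality on this interval, and using $\tan(\mathrm{atan}(t))=t$ for $t\geq 0$, gives
\[
\frac{\|{\bf{x}}^*\|_{\infty}}{\delta}\leq \tan\!\left(\frac{\|{\bf{b}}\|_2^2}{\lambda}\right),
\]
and multiplying through by $\delta$ delivers the claimed bound. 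There is no serious obstacle here; the only point needing care is the monotonicity-and-range argument for $\tan$, which is precisely what the assumed lower bound on $\lambda$ secures. Notably, the stationarity identities of Lemma \ref{lemma3-1} are not required for this estimate, since the comparison with the zero vector already controls the penalty term.
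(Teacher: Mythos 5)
Your proof is correct and follows essentially the same route as the paper: compare the objective at ${\bf{x}}^*$ with the zero vector to get $\lambda\|{\bf{x}}^*\|_{g_\delta}\leq \|{\bf{b}}\|_2^2$, bound the largest summand by the whole sum, and invert the arctangent using the hypothesis on $\lambda$. In fact your write-up is slightly more careful than the paper's, which omits the division by $\lambda$ in its intermediate inequality $\mathrm{atan}(\|{\bf{x}}^*\|_{\infty}/\delta)\leq \|{\bf{b}}\|_2^2$ and does not spell out why the lower bound on $\lambda$ is needed.
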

\begin{proof}
Since ${\bf{x}}^*$ is the solution of $l_{g_\delta}^{\lambda}$-minimization, we can get that
\begin{eqnarray}
\|{\bf{\Psi}}{\bf{x}}^*-{\bf{b}}\|_2^2+\lambda\|{\bf{x}}^*\|_{g_\delta}\leq \|{\bf{b}}\|_2^2
\end{eqnarray}
therefore, we have that
\begin{eqnarray}
\lambda\|{\bf{x}}^*\|_{g_\delta}\leq \|{\bf{b}}\|_2^2
\end{eqnarray}
Since $\lambda > \frac{2\|{\bf{b}}\|_2^2}{\pi}$, so it is easy to get that
\begin{eqnarray}
atan \frac{\|{\bf{x}}^*\|_{\infty}}{\delta}\leq \|{\bf{b}}\|_2^2
\end{eqnarray}
and
\begin{eqnarray}
\|{\bf{x}}^*\|_{\infty}\leq \delta\cdot tan \left(\frac{\|{\bf{b}}\|_2^2}{\lambda}\right).
\end{eqnarray}
\end{proof}
\begin{theorem}\label{theorem5}
There exists a constant $\lambda > \frac{2\|{\bf{b}}\|_2^2}{\pi}$ and if the following inequality holds
\begin{eqnarray}
\delta^2\frac{\lambda ^3}{4}m\|{\bf{\Psi}}\|_2^4\left(1+\frac{\|{\bf{b}}\|_2^2}{\lambda}\right)<\sigma_{min}({\bf{\Psi}})
\end{eqnarray}
where
\begin{eqnarray}
\notag \sigma_{min}({\bf{\Psi}})=min\{\sigma({\bf{\Psi}})|&&\sigma({\bf{\Psi}}) \ is \ the \ smallest \ singular \ value \ of \ the \ sub-matrix\\ \notag && which \ is \ composed \ by \ the \ linearly \ independent \\
&& column \ vectors \ of \ {\bf{\Psi}}\}
\end{eqnarray}
then the solution of $l_{g_\delta}^{\lambda}$-minimization also solves $l_{g_\delta}$-minimization.
\end{theorem}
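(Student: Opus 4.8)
The plan is to prove that the minimizer $\mathbf{x}^*$ of the regularized problem $l_{g_\delta}^{\lambda}$ is simultaneously feasible for and optimal in the equality-constrained problem $l_{g_\delta}$, i.e. that $\mathbf{\Psi}\mathbf{x}^* = \mathbf{b}$ and that $\mathbf{x}^*$ minimizes $\|\cdot\|_{g_\delta}$ over $\{\mathbf{x}:\mathbf{\Psi}\mathbf{x}=\mathbf{b}\}$. The two ingredients already in hand are the stationarity identities of Lemma~\ref{lemma3-1}, which pin down $\mathbf{\Psi}^T(\mathbf{b}-\mathbf{\Psi}\mathbf{x}^*)$ on the active coordinates, and the amplitude bound $\|\mathbf{x}^*\|_\infty \le \delta\tan(\|\mathbf{b}\|_2^2/\lambda)$ of Lemma~\ref{lemma3-2}, which holds precisely because the hypothesis forces $\lambda > 2\|\mathbf{b}\|_2^2/\pi$. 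The overall idea is to show that these two facts confine $\mathbf{x}^*$ to a neighbourhood of a genuinely feasible point on its own support, and then to use the singular-value gap $\sigma_{\min}(\mathbf{\Psi})$ to upgrade ``close to feasible'' into ``exactly feasible.''

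First I would establish, by repeating the convexity/perturbation argument of Lemma~\ref{lemma2-1} for the regularized functional, that the active sub-matrix $\mathbf{\Psi}_S$ with $S=\mathrm{supp}(\mathbf{x}^*)$ has full column rank, so that $\mathbf{\Psi}_S^T\mathbf{\Psi}_S$ is invertible with $\|(\mathbf{\Psi}_S^T\mathbf{\Psi}_S)^{-1}\|\le\sigma_{\min}(\mathbf{\Psi})^{-2}$. Next, using the restricted identity $2(\mathbf{\Psi}^T(\mathbf{b}-\mathbf{\Psi}\mathbf{x}^*))_i = \lambda\delta\,\mathrm{sgn}(x_i^*)/(\delta^2+(x_i^*)^2)$ from Lemma~\ref{lemma3-1}, I would solve for $\mathbf{x}_S^*$ and exhibit it as the exact feasible least-squares solution on $S$ plus a perturbation proportional to $\lambda\delta(\mathbf{\Psi}_S^T\mathbf{\Psi}_S)^{-1}\mathbf{v}$, where $v_i=\mathrm{sgn}(x_i^*)/(\delta^2+(x_i^*)^2)$. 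Feeding the amplitude bound of Lemma~\ref{lemma3-2} into $\mathbf{v}$, and using $\|\mathbf{\Psi}_S\|\le\|\mathbf{\Psi}\|_2$ together with the cardinality bound $|S|\le m$, one obtains an estimate for the induced residual $\|\mathbf{b}-\mathbf{\Psi}\mathbf{x}^*\|$ in terms of $\delta,\lambda,\|\mathbf{\Psi}\|_2,m$ and $\|\mathbf{b}\|_2^2/\lambda$; the careful bookkeeping of these constants is what is intended to reproduce the left-hand side $\delta^2\tfrac{\lambda^3}{4}m\|\mathbf{\Psi}\|_2^4(1+\|\mathbf{b}\|_2^2/\lambda)$ of the stated inequality.

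The hard part will be the final step, turning this quantitative closeness into the exact statements $\mathbf{\Psi}\mathbf{x}^*=\mathbf{b}$ and $g_\delta$-optimality. Here the hypothesis is read as a gap condition: because every full-column-rank active configuration is invertible with smallest singular value at least $\sigma_{\min}(\mathbf{\Psi})$, a feasible point and an infeasible one supported on the same columns cannot be reconciled unless the controlled deviation exceeds $\sigma_{\min}(\mathbf{\Psi})$; requiring the assembled bound to be strictly smaller than $\sigma_{\min}(\mathbf{\Psi})$ therefore rules out a nonzero residual and yields $\mathbf{\Psi}\mathbf{x}^*=\mathbf{b}$. It then remains to verify optimality by checking the off-support stationarity of $g_\delta$ at the inactive coordinates, which prevents any competing support from achieving a smaller $g_\delta$-value; this is the direct analogue of splitting the admissible set into $W$ and $\overline{W}$ in the proof of Theorem~\ref{theorem4}. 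I expect the delicate part to be exactly this conversion of a norm bound into a combinatorial gap statement, since it is where the precise shape of the hypothesis, and the definition of $\sigma_{\min}(\mathbf{\Psi})$ over all full-rank sub-matrices, must be used in full.
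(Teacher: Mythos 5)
Your plan goes in the opposite direction from what the two lemmas actually deliver, and the central step cannot be repaired. You read Lemma~\ref{lemma3-1} together with Lemma~\ref{lemma3-2} as confining $\mathbf{x}^*$ to a small neighbourhood of a feasible point, with a perturbation ``proportional to $\lambda\delta(\mathbf{\Psi}_S^T\mathbf{\Psi}_S)^{-1}\mathbf{v}$'' that you intend to bound above. But the stationarity identity gives $|(\mathbf{\Psi}^T(\mathbf{b}-\mathbf{\Psi}\mathbf{x}^*))_i|=\tfrac{\lambda\delta}{2(\delta^2+(x_i^*)^2)}$ on the support, and the amplitude bound $|x_i^*|\le\delta\tan(\|\mathbf{b}\|_2^2/\lambda)$ turns this into a \emph{lower} bound of order $\lambda/\bigl(\delta(1+\tan^2(\|\mathbf{b}\|_2^2/\lambda))\bigr)$: the correlation of the residual with the active columns \emph{diverges} as $\delta\to0$, so the ``perturbation'' you want to treat as small is in fact large, and no estimate of the form ``residual $\le$ (something $<\sigma_{\min}$)'' can come out of these ingredients. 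Your closing step is also unsupported on its own terms: there is no dichotomy forcing a residual to be either exactly zero or at least $\sigma_{\min}(\mathbf{\Psi})$ in norm, so ``close to feasible'' cannot be upgraded to ``exactly feasible'' by a singular-value gap; and for the non-convex penalty $g_\delta$, checking off-support stationarity at the inactive coordinates certifies nothing about global optimality over the affine feasible set.

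The paper's proof exploits the largeness, not the smallness, of the residual, and it argues by contradiction with an explicit competitor. Supposing $\mathbf{\Psi}\mathbf{x}^*=\mathbf{b}^*\neq\mathbf{b}$, it takes $\mathbf{y}^*$ to be the sparsest solution of $\mathbf{\Psi}\mathbf{y}=\mathbf{b}-\mathbf{b}^*$. The lower bound on $\|\mathbf{\Psi}^T(\mathbf{\Psi}\mathbf{x}^*-\mathbf{b})\|_2$ from Lemmas~\ref{lemma3-1} and~\ref{lemma3-2} forces $\|\mathbf{y}^*\|_2^2$ to be at least $\tfrac{\lambda^2}{4}\delta^{-2}\|\mathbf{\Psi}\|_2^{-4}\bigl(1+\tan^2(\|\mathbf{b}\|_2^2/\lambda)\bigr)^{-1}$, while $\|\mathbf{y}^*\|_{g_\delta}$ is trivially bounded by a multiple of $m$ since each arctan term is bounded; the hypothesis of the theorem is exactly what makes $\lambda\|\mathbf{y}^*\|_{g_\delta}\le\sigma_{\min}(\mathbf{\Psi})\|\mathbf{y}^*\|_2^2\le\|\mathbf{\Psi}\mathbf{y}^*\|_2^2=\|\mathbf{\Psi}\mathbf{x}^*-\mathbf{b}\|_2^2$. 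Subadditivity of $\|\cdot\|_{g_\delta}$ then shows the exactly feasible point $\mathbf{x}^*+\mathbf{y}^*$ has a strictly smaller regularized objective than $\mathbf{x}^*$, a contradiction. If you want to salvage your write-up, you should abandon the perturbation-plus-gap scheme and adopt this competitor construction; the only pieces of your proposal that survive are the invocation of the two lemmas and the role of $\sigma_{\min}(\mathbf{\Psi})$, but both enter with the inequalities pointing the other way.
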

\begin{proof}
We assume that the solution ${\bf{x}}^*$ of $l_{g_\delta}^{\lambda}$-minimization is not the solution of $l_{g_\delta}$-minimization, it is easy to get that
\begin{eqnarray}
{\bf{\Psi}}{\bf{x}}^*={\bf{b}}^*\neq {\bf{b}}
\end{eqnarray}
Let ${\bf{y}}^*$ be the sparest solution of ${\bf{A}}{\bf{y}}={\bf{b}}-{\bf{b}}^*$. Let ${\bf{B}}={\bf{\Psi}}_{supp({\bf{y}}^*)}$ and $\hat{{\bf{y}}}^*={\bf{y}}^*_{supp({\bf{y}}^*)}$. It is obvious that
\begin{eqnarray}
\sigma_{min}({\bf{\Psi}})\leq \frac{\|{\bf{B}}\hat{{\bf{y}}}^*\|_2^2}{\|\hat{{\bf{y}}}^*\|_2^2}
\end{eqnarray}
On the other hand, we have that
\begin{eqnarray}
\notag \|{\bf{y}}^*\|_2^2 &\geq & \frac{\|{\bf{A}}{\bf{y}}\|_2^2}{\|{\bf{\Psi}}\|_2^2}\\
&\geq &  \frac{\|{\bf{\Psi}}^T({\bf{\Psi}}{\bf{x}}^*-{\bf{b}})\|_2^2}{\|{\bf{A}}{\bf{y}}\|_2^4}
\end{eqnarray}
By Lemma \ref{lemma3-1} and Lemma \ref{lemma3-2}, we can get that
\begin{eqnarray}
\notag \|{\bf{\Psi}}^T({\bf{\Psi}}{\bf{x}}^*-{\bf{b}})\|_2^2 & \geq & \sum \limits_{i\in supp({\bf{x}}^*)}\frac{\lambda^2}{4}\cdot \frac{\delta^2}{(\delta^2+(x^*_i)^2)^2} \\
\notag &\geq & \frac{\lambda^2}{4}\cdot \frac{\delta^2}{(\delta^2+\|{\bf{x}}^*\|_{\infty}^2)^2} \\
\notag & \geq & \frac{\lambda^2}{4}\cdot \frac{1}{\delta^2\left(1+tan^2\frac{\|{\bf{b}}\|_2^2}{\lambda}\right)}
\end{eqnarray}
Therefore, we can get that
\begin{eqnarray}
\|{\bf{y}}^*\|_2^2\geq \frac{1}{\|{\bf{\Psi}}\|_2^4}\cdot \frac{\lambda ^2}{4}\frac{1}{\delta^2\left(1+tan^2\frac{\|{\bf{b}}\|_2^2}{\lambda}\right)}
\end{eqnarray}
By the assumption condition, we can get that
\begin{eqnarray}
\notag \lambda \|{\bf{y}}^*\|_{g_\delta} &\leq & \lambda m \cdot \frac{\|{\bf{y}}^*\|_2^2}{\|{\bf{y}}^*\|_2^2} \\
\notag &\leq & \lambda m\|{\bf{\Psi}}\|_2^4\cdot\frac{\lambda ^2}{4}\delta^2\left(1+tan^2\frac{\|{\bf{b}}\|_2^2}{\lambda}\right)\cdot \|{\bf{y}}^*\|_2^2 \\
\notag &\leq & \sigma_{min}({\bf{\Psi}})\|{\bf{y}}^*\|_2^2 \\
\notag &\leq &\frac{\|{\bf{b}}*\hat{{\bf{y}}}^*\|_2^2}{\|{\bf{y}}^*\|_2^2}\|{\bf{y}}^*\|_2^2 \\
&\leq & \|{\bf{b}}^*\hat{{\bf{y}}}^*\|_2^2=\|{\bf{\Psi}}{\bf{x}}^*-{\bf{b}}\|_2^2
\end{eqnarray}
So we can get that
\begin{eqnarray}
\notag \|{\bf{\Psi}}({\bf{x}}^*+{\bf{y}}^*)\|_2^2+\lambda\|{\bf{x}}^*+{\bf{y}}^*\|_{g_\delta}&=& \lambda \|{\bf{x}}^*+{\bf{y}}^*\|_{g_\delta} \\
\notag &\leq & \lambda \|{\bf{x}}^*\|_{g_\delta}+\lambda \|{\bf{y}}^*\|_{g_\delta} \\
& < &\lambda \|{\bf{x}}^*\|_{g_\delta}+\|{\bf{\Psi}}{\bf{x}}-{\bf{b}}\|_2^2
\end{eqnarray}
which contradicts the assumption.
\end{proof}
By Theorem \ref{theorem3}, Theorem \ref{theorem4} and Theorem \ref{theorem5}, we can get the equivalence between $l_{g_\delta}^{\lambda}$-minimization (\ref{fp-r}), $l_{0}^{\lambda}$-minimization (\ref{l0-r}), which means Theorem 1 is proved.

\section{Calculation of the Gradient}
To calculate the derivation of the cost function $\mathcal{L}(\mathcal{S})$ with respect to the
coordinate, ${x_p}$, of the target position, we rewrite $\mathcal{L}(\mathcal{S})$ as:

\begin{equation}
\mathcal{L}(\mathcal{S}) =  - {{\bf{u}}^T}{{\bf{V}}^{ - 1}}{\bf{u}}
\label{eq:38}
%(38)
\end{equation}
where
\begin{equation}
{\bf{u}}: = 2{\mathop{\rm Re}\nolimits} \left( {{{\bf{\Psi
}}^H}(\mathcal{S}){\bf{\hat r}}} \right) \in {^P}
\label{eq:39}
%(39)
\end{equation}
and

\begin{equation}
{\bf{V}}: = 2\left( {{\mathop{\rm Re}\nolimits} \left( {{{\bf{\Psi
}}^H}(\mathcal{S}){\bf{\Psi }}(\mathcal{S})} \right) + {\lambda }{\bf{B}}} \right)
\in {^{P \times P}}
\label{eq:40}
%(40)
\end{equation}
Referring to [22], and using the chain rule, ${\textstyle{{\partial ()} \over
{\partial {x_p}}}}$ can be expressed as:

\begin{equation}
\frac{{\partial }\mathcal{L}}{{\partial {x_p}}} = (\Delta{_{\bf{\Psi }}}\mathcal{L})(\Delta{_{{x_p}}}{\bf{\Psi }})
+ (\Delta{_{{{\bf{\Psi }}^*}}}\mathcal{L})(\Delta{_{{x_p}}}{{\bf{\Psi }}^*})
\label{eq:41}
%(41)
\end{equation}
where $ \Delta $ denotes the vectorized version of the derivation of a complex matrix with
respect to another complex matrix, and $\Delta{_{\bf{\Psi }}}\mathcal{L} \in {^{{L^2}MP}}$,
$\Delta{_{{x_p}}}{\bf{\Psi }} \in {^{{L^2}MP}}$. Notice that the notation, $\mathcal{S}{^{(i)}}$,
in \eqref{eq:42} is omitted for the sake of simplicity. Because $ \mathcal{L} $ and ${x_p}$
are real variables, we get:

\begin{equation}
\frac{{\partial }\mathcal{L}}{{\partial {x_p}}} = 2{\mathop{\rm Re}\nolimits} \left[
{(\Delta{_{\bf{\Psi }}}\mathcal{L})(\Delta{_{{x_p}}}{\bf{\Psi }})} \right]
\label{eq:42}
%(42)
\end{equation}
Using the lemma of finding the derivative of a product of two functions,
$\Delta{_{\bf{\Psi }}}\mathcal{L}$ can be calculated as:

\begin{equation}
\begin{array}{l}
{\Delta_{\bf{\Psi }}\mathcal{L}} =  - ({({{\bf{V}}^{ - 1}}{\bf{u}})^T} \otimes
{{\bf{I}}_1})\Delta{_{\bf{\Psi }}}{{\bf{u}}^T} - ({{\bf{I}}_1} \otimes
{{\bf{u}}^T})\Delta{_{\bf{\Psi }}}({{\bf{V}}^{ - 1}}{\bf{u}})\\
=  - {({{\bf{V}}^{ - 1}}{\bf{u}})^T}\Delta{_{\bf{\Psi }}}{{\bf{u}}^T} -
{{\bf{u}}^T}(({{\bf{u}}^T} \otimes {{\bf{I}}_P})\Delta{_{\bf{\Psi }}}{{\bf{V}}^{ - 1}}
+ ({{\bf{I}}_1} \otimes {{\bf{V}}^{ - 1}})\Delta{_{\bf{\Psi }}}{\bf{u}})\\
=  - {({{\bf{V}}^{ - 1}}{\bf{u}})^T}\Delta{_{\bf{\Psi }}}{{\bf{u}}^T} -
{{\bf{u}}^T}({{\bf{u}}^T} \otimes {{\bf{I}}_P})\Delta{_{\bf{\Psi }}}{{\bf{V}}^{ - 1}} -
{{\bf{u}}^T}{{\bf{V}}^{ - 1}}\Delta{_{\bf{\Psi }}}{\bf{u}}
\end{array}
\label{eq:43}
%(43)
\end{equation}

{\raggedright
where
}

\begin{equation}
{\Delta_{\bf{\Psi }}}{\bf{u}} = {\Delta_{\bf{\Psi }}}{{\bf{u}}^T} = {{\bf{I}}_P} \otimes
({{\bf{\hat r}}^H})
\label{eq:44}
%(44)
\end{equation}

{\raggedright
and
}

\begin{equation}
\begin{array}{c}
{\Delta_{\bf{\Psi }}}{{\bf{V}}^{ - 1}} =  - {({{\bf{V}}^T})^{ - 1}} \otimes
{{\bf{V}}^{ - 1}}{\Delta_{\bf{\Psi }}}{\bf{V}}\\
=  - \left( {{{({{\bf{V}}^T})}^{ - 1}} \otimes {{\bf{V}}^{ - 1}}} \right)\left[
{{{\bf{I}}_P} \otimes ({{\bf{\Psi }}^H})} \right.\\
+ \left. {\left( {({{\bf{\Psi }}^H}) \otimes {{\bf{I}}_P}}
\right){{\bf{K}}_{{L^2},P}}} \right]
\end{array}
\label{eq:45}
%(45)
\end{equation}

{\raggedright
where ${{\bf{K}}_{{L^2},P}}$ is commutation matrix of size ${L^2}P \times
{L^2}P$ [22]. Then, ${\Delta_{\bf{\Psi }}\mathcal{L}}$ is given by the following expression:
}

\begin{equation}
\begin{array}{c}
{\Delta_{\bf{\Psi }}\mathcal{L}} =  - 2{({{\bf{V}}^{ - 1}}{\bf{u}})^T}({{\bf{I}}_P} \otimes
({{{\bf{\hat r}}}^H}))\\
+ {{\bf{u}}^T}({{\bf{u}}^T} \otimes {{\bf{I}}_P})({({{\bf{V}}^T})^{ - 1}}
\otimes {{\bf{V}}^{ - 1}})\left[ {{{\bf{I}}_P} \otimes ({{\bf{\Psi
}}^H})} \right.\\
+ \left. {\left( {({{\bf{\Psi }}^H}) \otimes {{\bf{I}}_P}}
\right){{\bf{K}}_{{L^2},P}}} \right]
\end{array}
\label{eq:46}
%(46)
\end{equation}

{\raggedright
Substituting \eqref{eq:46} into \eqref{eq:42}, we get:
}

\begin{equation}
\begin{array}{c}
\frac{{\partial }\mathcal{L}}{{\partial {x_p}}} = 2{\mathop{\rm Re}\nolimits} \left\{ { -
2{{({{\bf{V}}^{ - 1}}{\bf{u}})}^T}\left( {{{\bf{I}}_P} \otimes ({{{\bf{\tilde
r}}}^H})} \right){\Delta_{{x_p}}}{\bf{\Psi }}} \right.\\
+ {{\bf{u}}^T}({{\bf{u}}^T} \otimes {{\bf{I}}_P})\left( {{{({{\bf{V}}^T})}^{ -
1}} \otimes {{\bf{V}}^{ - 1}}} \right)\left[ {{{\bf{I}}_P} \otimes ({{\bf{\Psi
}}^H})} \right.\\
\left. { + \left. {\left( {({{\bf{\Psi }}^H}) \otimes
{{\bf{I}}_P}} \right){{\bf{K}}_{{L^2},P}}} \right]{_{{x_p}}}{\bf{\Psi }}}
\right\}
\end{array}
\label{eq:47}
%(47)
\end{equation}

{\raggedright
Because ${\Delta_{{x_p}}}{\bf{\Psi }} = vec({\textstyle{{\partial {\bf{\Psi }}} \over
{\partial {x_p}}}})$, and ${\bf{V}}$ is symmetric, ${\textstyle{{\partial }\mathcal{L} \over
{\partial {x_p}}}}$ can be expressed in a more compact way:
}

\begin{equation}
\begin{array}{l}
\frac{{\partial }\mathcal{L}eq:52}{{\partial {x_p}}} = 2{\mathop{\rm Re}\nolimits} \left[ { -
2{{\bf{u}}^T}{{\bf{V}}^{ - 1}}\frac{{\partial {{\bf{\Psi }}^T}}}{{\partial
{x_p}}}{{{\bf{\hat r}}}^{\rm{*}}}} \right.\\
\left. { + {{\bf{u}}^T}{{\bf{V}}^{ - 1}}({{\bf{\Psi
}}^H}\frac{{\partial {\bf{\Psi }}}}{{\partial {x_p}}} +
\frac{{\partial {{\bf{\Psi }}^T}}}{{\partial
{x_p}}}{{\bf{\Psi }}^{\rm{*}}}){{\bf{V}}^{ -
1}}{\bf{u}}} \right]
\end{array}
\label{eq:48}
%(48)
\end{equation}

{\raggedright
where ${\textstyle{{\partial {\bf{\Psi }}} \over {\partial {x_p}}}} \in
{^{{L^2}M \times P}}$. Since ${x_p}$ is only relevant to the \textit{p}th column
of ${\bf{\Psi }}$, we get ${\textstyle{{\partial {\bf{\Psi }}} \over {\partial
{x_p}}}} = [{\bf{0}},...,{\textstyle{{\partial {{\boldsymbol{\psi }}_p}} \over {\partial
{x_p}}}},...,{\bf{0}}]$, where ${{\boldsymbol{\psi }}_p}$ is the \textit{p}th column of
${\bf{\Psi }}$, and ${\textstyle{{\partial {{\boldsymbol{\psi }}_p}} \over {\partial
{x_p}}}}$ is given by:
}

\begin{equation}
\frac{{\partial {{\boldsymbol{\psi }}_p}}}{{\partial {x_p}}} = \left[
{\begin{array}{*{20}{c}}
{{{\bf{\beta }}_1}(p)}\\
\cdots \\
{{{\bf{\beta }}_M}(p)}
\end{array}} \right]
\label{eq:49}
%(49)
\end{equation}
The entries of \eqref{eq:49} is given by:

\begin{equation}
{{\bf{\beta }}_m}(p) = {\left[ {\beta _{11}^{(m)}(p),...,\beta
_{1L}^{(m)}(p),{\kern 1pt} {\kern 1pt} {\kern 1pt} ...,\beta
_{L1}^{(m)}(p),...,\beta _{LL}^{(m)}(p)} \right]^T}
\label{eq:50}
%(50)
\end{equation}

{\raggedright
where, according to the definition of ${\bf{\Psi }}$ in \eqref{eq:10},
}

\begin{equation}
\begin{array}{c}
\beta _{lk}^{(m)}(p) =  - j2\pi \frac{f}{c}{e^{ - j2\pi {\textstyle{f \over
c}}(||{{\bf{a}}_{ml}} - {{\bf{p}}_p}|| - ||{{\bf{a}}_{mk}} -
{{\bf{p}}_p}||)}}\left[ {||{{\bf{a}}_{ml}} - {{\bf{p}}_p}|{|^{ - 1}}} \right.\\
\left. { \cdot ({x_p} - {x_{ml}}) - ||{{\bf{a}}_{mk}} - {{\bf{p}}_p}|{|^{ -
1}}({x_p} - {x_{mk}})} \right]
\end{array}
\label{eq:51}
%(51)
\end{equation}
Substituting ${\textstyle{{\partial {\bf{\Psi }}} \over {\partial {x_p}}}}$ back
into \eqref{eq:48}, we get the final ${\textstyle{{\partial } \over {\partial
{x_p}}}}$. The other derivations, ${\textstyle{{\partial ()} \over {\partial
{y_p}}}}$ and ${\textstyle{{\partial ()} \over {\partial {z_p}}}}$, can also be
calculated through the same way of ${\textstyle{{\partial ()} \over {\partial
{x_p}}}}$.

\section{Cram\'{e}r-Rao Lower Bound}

{\raggedright
\hspace{2em}First we stack the signals of \textit{M} time instances all together,
}

\begin{equation}
{{\bf{s}}_{{\rm{r}}n}} = {\bf{A}}{{\bf{s}}_n} + {{\bf{n}}_{{\rm{r}}n}}
\label{eq:52}
%(52)
\end{equation}

{\raggedright
where
}

\begin{equation}
{\bf{A}} = \left[ {\begin{array}{*{20}{c}}
{{\bf{{\rm A}}}({t_{\rm{1}}})}&{}&{}\\
{}& \ddots &{}\\
{}&{}&{{\bf{{\rm A}}}({t_M})}
\end{array}} \right]
\label{eq:53}
%(53)
\end{equation}
the received signal vector ${{\bf{s}}_{{\rm{r}}n}} = {[\begin{array}{*{20}{c}}
{{\bf{s}}_{\rm{r}}^T({t_{1,n}})}& \cdots &{{\bf{s}}_{\rm{r}}^T({t_{M,n}})}
\end{array}]^T}$, the emitter signal vector ${{\bf{s}}_n} =
{[\begin{array}{*{20}{c}}
{{{\bf{s}}^T}({t_{1,n}})}& \cdots &{{{\bf{s}}^T}({t_{M,n}})}
\end{array}]^T}$, and the noise vector ${{\bf{n}}_{{\rm{r}}n}} =
{[\begin{array}{*{20}{c}}
{{\bf{n}}_{\rm{r}}^T({t_{1,n}})}& \cdots &{{\bf{n}}_{\rm{r}}^T({t_{M,n}})}
\end{array}]^T}$. Define the position parameters to be estimated using a vector
${\bf{p}} = {[{\bf{p}}_1^T,...,{\bf{p}}_K^T]^T}$. The CRB is a lower bound of the
variances of the parameters ${\bf{p}}$ for any unbiased estimator. Following the
steps of [23], and [5], we get the CRB of ${\bf{p}}$:

\begin{equation}
{\bf{CRB}}({\bf{p}}) = \frac{{\sigma _n^2}}{2}{\left[ {\sum\limits_{n = 1}^N
{{\mathop{\rm Re}\nolimits} \left( {{\bf{S}}_n^H{{\bf{D}}^H}{\bf{P}}_{\bf{A}}^
\bot {\bf{D}}{{\bf{S}}_n}} \right)} } \right]^{ - 1}}
\label{eq:54}
%(54)
\end{equation}

{\raggedright
where
}

\[
{\bf{P}}_{\bf{A}}^ \bot  = {{\bf{I}}_{ML}} - {\bf{A}}{\left(
{{{\bf{A}}^H}{\bf{A}}} \right)^{ - 1}}{{\bf{A}}^H} \in {{\bf{C}}^{ML \times ML}},
\]

\[
{{\bf{S}}_n} = {{\bf{I}}_{3K}} \otimes {{\bf{s}}_n} \in {{\bf{C}}^{3M{K^2}
\times 3K}},
\]

\begin{equation}
{\bf{D}} = \left[ {\frac{{\partial {\bf{A}}}}{{\partial {x_1}}},\frac{{\partial
{\bf{A}}}}{{\partial {y_1}}},...,\frac{{\partial {\bf{A}}}}{{\partial
{x_P}}},\frac{{\partial {\bf{A}}}}{{\partial {y_P}}}} \right]
\label{eq:55}
%(55)
\end{equation}

{\raggedright
The partial derivative of \textbf{A} with respect to ${x_p}$ is given by:
}

\begin{equation}
\frac{{\partial {\bf{A}}}}{{\partial {x_p}}} = \left[ {\begin{array}{*{20}{c}}
{\frac{{\partial {\bf{A}}({t_1})}}{{\partial {x_p}}}}&{}&{}\\
{}& \ddots &{}\\
{}&{}&{\frac{{\partial {\bf{A}}({t_M})}}{{\partial {x_p}}}}
\end{array}} \right]
\label{eq:56}
%(56)
\end{equation}
and the partial derivative of ${\bf{A}}({t_m})$ with respect to ${x_p}$ is given
by:

\[
\frac{{\partial {\bf{A}}({t_m})}}{{\partial {x_p}}} =
[{\bf{0}},...,\frac{{\partial {{\bf{\alpha }}_p}({t_m})}}{{\partial
{x_p}}},...,{\bf{0}}],
\]
where

\begin{equation}
\frac{{\partial {{\bf{\alpha }}_p}({t_m})}}{{\partial {x_p}}} = {\left[ {\beta
_{11}^{(m)}(p),...,\beta _{L1}^{(m)}(p)} \right]^T}
\label{eq:57}
%(57)
\end{equation}
where $\beta _{l1}^{(m)}(p)$ is defined as in \eqref{eq:51}. The partial
derivatives of \textbf{A} with respect to ${y_p}$ and ${z_p}$ have similar
formulas as \eqref{eq:56}.

%% \label{}

%% If you have bibdatabase file and want bibtex to generate the
%% bibitems, please use
%%
\bibliographystyle{elsarticle-harv}
%%  \bibliography{<your bibdatabase>}

%% else use the following coding to input the bibitems directly in the
%% TeX file.
%\vspace{12 pt}
%{\raggedright
% \textbf{\textsc{References}}
%}

\end{document}